\newtheorem{Theorem}{Theorem}
\newtheorem{Corollary}{Corollary}
\newtheorem{Proposition}{Proposition}
\newtheorem{Lemma}{Lemma}
\newtheorem{Remark}{Remark}
\newcommand{\CC}{{\mathrm{C}}}
\newcommand{\ds}{{\mathrm{d}}s}
\newcommand{\dy}{{\mathrm{d}}y}
\newcommand{\Id}{{\mathbf{1}}}
\newcommand{\dom}{{\mathrm{dom}~}}
\newcommand{\var}{\varepsilon}
\newcommand{\partialoperator}{\partial^{Rh}_{s,y}\,}
\begin{document}

\title{A note on the spectrum of the Neumann Laplacian in periodic waveguides}
\author{Carlos R. Mamani {\small and} Alessandra A. Verri \\ 
\vspace{-0.6cm}
\small
\em Departamento de Matem\'{a}tica -- UFSCar, \small \it S\~{a}o Carlos, SP, 13560-970 Brazil\\ \\}


\maketitle

\begin{abstract} 
We study the Neumann Laplacian $-\Delta^N$ restricted to a periodic waveguide.
In this situation its spectrum  $\sigma(-\Delta^N)$ presents a band structure. Our goal and 
strategy is to get spectral information from an analysis of the
asymptotic behavior of these bands provided that the waveguide is sufficiently thin.
\end{abstract}

\

\section{Introduction}

Let $\Lambda$ be a periodic strip (in $\mathbb R^2$) or a periodic tube
(in $\mathbb R^3$). Denote by $-\Delta$  the Laplacian operator  restricted to $\Lambda$.
At the boundary $\partial \Lambda$, 
consider the Dirichlet or  Neumman conditions.
An interesting point is to know something 
about the spectrum $\sigma(-\Delta)$
which has a band structure.

In \cite{yoshitomi} the author  studied the band gap of the spectrum 
of the Dirichlet Laplacian  in a periodic strip in $\mathbb R^2$.
In a more particular situation, in \cite{friedlandersolomyak2} the authors 
studied the band lengths as the diameter of the strip tends to zero.
In \cite{sobolev} the authors 
proved the  absolute continuity for $-\Delta$ in a periodic strip
with either Dirichlet or Neumann conditions.

In the case of periodic tubes,
the absolute continuity was proven in
\cite{bde, friedabscon, vema}.
In \cite{bde, vema} only the Dirichlet boundary condition was considered. 
In \cite{friedabscon} the boundary conditions are more 
general, but a symmetry condition is required.
In \cite{nazarov}, the author
established the existence of gaps in the essential spectrum of the Neumann Laplacian
in a periodic tube.

Consider the Neumann Laplacian $-\Delta^N$ restricted to a periodic waveguide in $\mathbb R^3$. 
This work has two goals. The first one, is to obtain information about the absolutely continuous spectrum of $-\Delta^N$.
The second,
is to prove the existence of band gaps in $\sigma(-\Delta^N)$; although this result
is proven in $\cite{nazarov}$, we give an alternative proof in this text.
We highlight that our purpose is to prove the results above from
an analysis of the
asymptotic behavior of the bands of $\sigma(-\Delta^N)$ provided that the waveguide is sufficiently thin.
Ahead, we give more details.

Let $r: \mathbb R \to \mathbb R^3$ be a simple $C^3$ curve in $\mathbb R^3$
parametrized by its arc-length parameter $s$.
Suppose that $r$ is periodic, i.e., there exists $L > 0$ and a nonzero vector $\vec{u}$ so that
$r(s+L)= \vec{u} + r(s), \forall s \in \mathbb R$. 
Denote by $k(s)$ and $\tau(s)$ the curvature and  torsion of $r$ at the position $s$,
respectively.
Pick $S \neq \emptyset$; an open, bounded, smooth and connected subset of~$\mathbb R^2$. 
Build a waveguide $\Lambda$ in~$\mathbb R^3$ by properly moving the 
region~$S$ along~$r(s)$; at each 
point~$r(s)$ the cross-section region $S$ may present a (continuously differentiable) 
rotation angle $\alpha(s)$.
Suppose that $\alpha(s)$ is  $L$-periodic.
For each  $\var > 0$ (small enough), one can perform this same construction with the region $\var S$ 
and so obtaining a thin waveguide $\Lambda_\varepsilon$.

Now, let $h: \mathbb R \to \mathbb R$ be a $L$-periodic and $C^2$ function satisfying
\begin{equation}\label{conhintro}
0 < c_1 \leq h(s) \leq c_2, \, \forall s \in \mathbb R.
\end{equation}
We consider the thin waveguide, as presented above,
but we deform it by multiplying their cross sections by the function $h(s)$. Thus, we obtain 
a deformed thin tube
$\Omega_\varepsilon$;
see Section \ref{geometrydomain} for details of this construction.

Let $-\Delta_{\Omega_\varepsilon}^N$ be the Neumann Laplacian in $\Omega_\varepsilon$, i.e., the self-adjoint
operator associated with the quadratic form
\begin{equation}\label{quadformneuintro}
\psi \mapsto \int_{\Omega_\varepsilon} |\nabla \psi|^2 d \vec{x}, \quad 
\psi \in  H^1(\Omega_\varepsilon).
\end{equation}

The first result of this work states that

\begin{Theorem}\label{maintheoremintrod}
For each $E > 0$, there exists  $\varepsilon_E > 0$ so that the spectrum of
$-\Delta_{\Omega_\varepsilon}^N$ is absolutely continuous in the interval 
$[0,E]$, for all $\varepsilon \in (0, \varepsilon_E)$.
\end{Theorem}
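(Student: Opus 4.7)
The plan is to combine a Floquet--Bloch decomposition with an asymptotic thin-waveguide analysis, reducing $-\Delta_{\Omega_\varepsilon}^N$ to a one-dimensional effective periodic Sturm--Liouville operator whose bands are non-constant by classical theory.

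First, I would exploit the $L$-periodicity along $r$ to write, after a unitary transformation,
\[
-\Delta_{\Omega_\varepsilon}^N \;\cong\; \int_{[-\pi/L,\pi/L]}^{\oplus} T_\theta^\varepsilon \, d\theta,
\]
where each fiber $T_\theta^\varepsilon$ acts on a single period cell of $\Omega_\varepsilon$ with $\theta$-quasi-periodic conditions on the two ``cut'' faces and Neumann conditions on the lateral boundary. Each fiber has compact resolvent, hence a discrete spectrum $\lambda_1(\theta,\varepsilon) \le \lambda_2(\theta,\varepsilon) \le \cdots$, and $\sigma(-\Delta_{\Omega_\varepsilon}^N)$ is the union of the band functions $\theta \mapsto \lambda_n(\theta,\varepsilon)$. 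By Kato's analytic perturbation theory these bands are real-analytic in $\theta$ away from finitely many crossings, so absolute continuity on $[0,E]$ reduces to showing that no band whose range meets $[0,E]$ is constant.

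Next, I would perform an $\varepsilon$-dependent change of variables (straightening the curve via the frame determined by $r(s)$ and the rotation angle $\alpha(s)$, and rescaling the cross section by a factor involving $\varepsilon\, h(s)$) to transport $T_\theta^\varepsilon$ onto the fixed reference cell $(0,L)\times S$, with $\theta$-quasi-periodic conditions in $s$ and Neumann conditions on $(0,L) \times \partial S$. The transformed operator decomposes into a transverse Neumann Laplacian on the dilated cross section, whose ground state is the constant function (eigenvalue $0$) and whose second eigenvalue is bounded below by $C/\varepsilon^2$ uniformly in $s$ and $\theta$, plus a longitudinal contribution. Projecting onto the constant transverse mode produces an effective one-dimensional $L$-periodic Sturm--Liouville operator $T_\theta^0$ of the form $-\frac{1}{h^2(s)}\,\partial_s\!\left(h^2(s)\,\partial_s\,\cdot\right) + V_{\mathrm{eff}}(s)$, with $V_{\mathrm{eff}}$ a bounded $L$-periodic potential built from $k(s)$, $\tau(s)$, $\alpha'(s)$ and $h(s)$. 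A min-max argument, using the product of the constant transverse mode with a longitudinal Bloch function as a trial function, together with the $C/\varepsilon^2$ transverse gap, then gives $\lambda_n(\theta,\varepsilon) \to \mu_n(\theta)$ as $\varepsilon \to 0$, uniformly in $\theta \in [-\pi/L,\pi/L]$, for each $n$ such that $\mu_n$ stays below any prescribed height.

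Finally, the limit family $\{T_\theta^0\}$ is a regular one-dimensional periodic Sturm--Liouville problem, for which classical Floquet theory yields that each band $\mu_n(\theta)$ is strictly monotone and analytic on $(0,\pi/L)$, with a quantitative oscillation $\delta_n > 0$. Given $E>0$, only finitely many bands $\mu_n$ meet $[0,E+1]$; the uniform convergence forces the corresponding bands $\lambda_n(\cdot,\varepsilon)$ of $T_\theta^\varepsilon$ to oscillate by at least $\delta_n/2$ for all sufficiently small $\varepsilon$, hence to be non-constant, which gives absolute continuity of $\sigma(-\Delta_{\Omega_\varepsilon}^N)$ on $[0,E]$. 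The main obstacle I anticipate is making the thin-waveguide reduction uniform in the quasi-momentum $\theta$ and in the band index: the cross terms induced by $k(s)$, $\tau(s)$ and $\alpha'(s)$ couple longitudinal and transverse variables, and they must be controlled against both the transverse gap (of order $\varepsilon^{-2}$) and the longitudinal energies (bounded by $E$). It is precisely the transverse gap that isolates a finite number of bands below $E$ and renders the perturbative argument uniform over the whole Brillouin zone.
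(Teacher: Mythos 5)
Your proposal follows essentially the same route as the paper: Floquet--Bloch decomposition over the Brillouin zone, analyticity of the band functions $\theta\mapsto E_n(\varepsilon,\theta)$ (the paper obtains full analyticity by conjugating to a type~A analytic family with $\theta$-independent domain, rather than only analyticity away from crossings), a norm-resolvent reduction, uniform in $\theta$, to a one-dimensional effective periodic operator, and finally non-constancy of the limit bands via classical one-dimensional Floquet theory (Reed--Simon, Theorem~XIII.89) together with the uniform approximation to force non-constancy of the actual bands, whence absolute continuity by Section~XIII.16 of Reed--Simon. One detail you guessed incorrectly, and which the paper highlights as a distinctive feature of the Neumann case: the effective potential is $h''(s)/h(s)$ alone, and the curvature $k$, torsion $\tau$, and twist $\alpha'$ do \emph{not} enter the limiting operator $T^\theta=(-i\partial_s+\theta)^2+h''/h+c$ (in contrast to the Dirichlet situation); this, however, does not affect the logic of your argument, since the strict monotonicity of Hill bands used to rule out flat bands holds for any bounded periodic potential.
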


In \cite{friedabscon} the absolute continuity for $-\Delta_{\Omega_\varepsilon}^N$ was proven under 
the condition of invariance under the reflection
$s \mapsto -s$.

At first, in this introduction, 
we present the main steps of the proof of Theorem \ref{maintheoremintrod};
the details will be presented along the work. Then, we comment our strategy to guarantee
the existence of gaps in the spectrum $\sigma(-\Delta_{\Omega_\varepsilon}^N)$.

Fix a number $c> 0$. Denote by $\Id$ the identity operator. For technical reasons, we 
are going to study the operator $-\Delta_{\Omega_\varepsilon}^N + c \, \Id$;
see Section \ref{reductiondim}.

A change of coordinates shows that $-\Delta_{\Omega_\varepsilon}^N + c \, \Id$
is unitarily equivalent to the operator 
 \begin{equation}\label{operatorassociado}
	T_\varepsilon \psi :=-\frac{1}{h^2 \beta_\varepsilon}\left[\left(\partial_s+{\rm div}_yR^h\right)
	\frac{h^2}{\beta_\varepsilon}\partial^{Rh}_{s,y} \psi+ 
	\frac{1}{\varepsilon^2} {\rm div}_y \left( \beta_\varepsilon \nabla_y \psi \right) \right]  + c \, \psi,
 \end{equation}
 \begin{equation}\label{operatorassociadodom}
\dom T_{\varepsilon}:=	\left\{\psi\in \mathcal{H}^2(\mathbb{R}\times S): \,
\frac{\partial^{Rh}\psi}{\partial N}=0 \quad \text{on} \quad \partial(\mathbb{R}\times S) \right\},
\end{equation}
acting in the Hilbert space $L^2(\mathbb R \times S, h^2 \beta_\varepsilon\ds \dy)$.
Here, $y:=(y_1,y_2) \in S$, ${\rm div_y}$ denotes the divergent of a
vector field in $S$,
\begin{eqnarray}
\beta_\varepsilon(s,y) & :=  & 1 - \varepsilon k(s) (y_1 \cos \alpha(s) + y_2 \sin \alpha(s)),\label{defpeso}\\
 (\partial^{Rh}_{s,y}\psi)(s,y) &:=&\partial_s\psi(s,y)+ \langle\nabla_y\psi(s,y), R^h(s,y)\rangle,\label{opeesq}\\
 R^h(s,y)&:= &  \left(R \, y\right)  (\tau+\alpha')(s) - y \frac{h'(s)}{h(s)}\label{rh},
\end{eqnarray} 
where
$\partial_s \psi:= \partial\psi/\partial s$, 
$\nabla_y\psi:=(\partial\psi/\partial y_1,\partial \psi/\partial y_2)$, 
and  $R$ is the
rotation  matrix $\left( \begin{matrix}
0 & 1\\
-1 & 0
\end{matrix}\right)$.
Furthermore,
\begin{equation}\label{connesq}
\frac{\partial^{Rh}\psi}{\partial N}(s,y) := \frac{h^2(s)}{\beta_\varepsilon(s,y)}
\langle R^h(s,y),N(y)\rangle\partial^{Rh}_{s,y}\psi(s,y)+\frac{\beta_\varepsilon(s,y)}{\varepsilon^2}\langle\nabla_y \psi(s,y), N(y)\rangle;
\end{equation} 
$N$ denotes  the outward point unit normal vector field of $\partial S$.

Since the coefficients of $T_\varepsilon$ are periodic with respect to $s$, we utilize the Floquet-Bloch reduction under the Brillouin zone ${\cal C}:=[-\pi/L, \pi/L]$. More precisely, we show that
$T_\varepsilon$ is unitarily equivalent to the operator
$\int_{{\cal C}}^\oplus T_\varepsilon^\theta \, d \theta,$
where  
\begin{equation}\label{operatorfibraintro}
T^\theta_{\varepsilon} \psi :=-\frac{1}{h^2 \beta_\varepsilon}\left[\left(\partial_s+{\rm div}_y R^h+i\theta\right)
\frac{h^2}{\beta_\varepsilon}\left(\partial^{Rh}_{s,y}+i\theta\right) \psi +
\frac{1}{\varepsilon^2}{\rm div}_y (\beta_\varepsilon \nabla_y \psi) \right]
+ c \, \psi,
\end{equation}
with domain
\begin{eqnarray*}\label{domainoperatorfiber}
& & \dom T^\theta_\varepsilon =   \Big\{ \psi \in \mathcal{H}^2([0,L)\times  S): \\
& & \psi(0,\cdot) =\psi(L,\cdot) \quad \hbox{and} \quad 
\partial^{Rh}_{s,y}\psi(0,\cdot)=\partial^{Rh}_{s,y}\psi(L,\cdot) \quad \hbox{in} \quad L^2(S),\\
& &\, \frac{\partial^{Rh}\psi}{\partial N} =-i\theta  \frac{h^2}{\beta_\varepsilon} \langle R^h,N\rangle\psi \quad
\text{in} \quad L^2([0,L)\times \partial S)\Big\}.
\end{eqnarray*}
Although acting in the Hilbert space $L^2([0,L) \times S, h^2 \beta_\varepsilon \ds \dy)$, 
$\partial_{s,y}^{Rh} \psi$ and 
$\partial^{Rh} \psi / \partial N$ have action given
 by
(\ref{opeesq}), (\ref{rh}) and (\ref{connesq}), respectively.
Furthermore, for each $\theta\in \mathcal{C}$, $T^\theta_\varepsilon$ is self-adjoint;
see Lemma \ref{lemmadecfloq} in Section \ref{floquetbloch} for this decomposition.

Each $T_\varepsilon^\theta$ has compact resolvent and is bounded from below. Thus, 
$\sigma(T_\varepsilon^\theta)$ is discrete.
Denote by $\{E_n(\varepsilon, \theta)\}_{n \in \mathbb N}$ the family of all eigenvalues of $T_\varepsilon^\theta$
and by $\{\psi_n(\varepsilon, \theta)\}_{n \in \mathbb N}$ family of the corresponding normalized  eigenfunctions, i.e.,
\begin{equation}\label{eigenfunctionfloquet}
T_\varepsilon^\theta \psi_n(\varepsilon, \theta) = E_n(\varepsilon, \theta) \psi_n(\varepsilon, \theta),
\quad n =1, 2, 3, \cdots, \quad \theta \in {\cal C}.
\end{equation}

We have
\begin{equation}\label{sigmaband}
\sigma(-\Delta_{\Omega_\varepsilon}^N) = \cup_{n=1}^{\infty} \left\{E_n(\varepsilon, {\cal C})\right\},
\quad \hbox{where} \quad E_n(\varepsilon, {\cal C}) :=  
\cup_{\theta \in {\cal C}} \left\{E_n(\varepsilon, \theta)\right\}.
\end{equation}
Thus, in order to study the spectrum $\sigma(-\Delta_{\Omega_\varepsilon}^N)$, we need 
to analyze each  $E_n(\varepsilon, {\cal C})$ which
is called $n$th band of 
$\sigma(-\Delta_{\Omega_\varepsilon}^N)$.


For each $\theta \in {\cal C}$,
consider the unitary operator ${\cal W}_\theta$ given by (\ref{uniwthetaana})
in Section \ref{analyticity}.
Define
$\tilde{T}_\varepsilon^\theta := {\cal W}_\theta T_\varepsilon^\theta {\cal W}_\theta^{-1}$,
$\dom \tilde{T}_\varepsilon^\theta =
{\cal W}_\theta (\dom T_\varepsilon^\theta)$. 
Due to the definition of ${\cal W}_\theta$,
each domain $\dom \tilde{T}_\varepsilon^\theta$ is independent of $\theta$.
Thus, in that same section,  we prove that 
$\{ \tilde{T}^\theta_\varepsilon, \theta \in {\cal C}\}$
is a type A analytic family. 
This fact ensures that
$E_n(\varepsilon, \theta)$, $n=1,2,3, \cdots,$ are
real analytic functions. 
In addition to this information, another important point to prove Theorem \ref{maintheoremintrod} is 
to know an
asymptotic behavior of the eigenvalues $E_n(\varepsilon, \theta)$ as $\varepsilon$ tends to $0$.
For each $\theta \in {\cal C}$, consider the one dimensional self-adjoint operator
\begin{equation}\label{effectiveoperator}
T^{\theta} w: = (-i \partial_s + \theta)^2 w + 
\frac{h''(s)}{h(s)} w + c \, w, \quad \hbox{in } L^2[0,L), 
\end{equation}
where the functions in
$\dom T^\theta$ satisfy the  conditions $w(0)=w(L)$ and $w'(0)=w'(L).$
For simplicity, write $Q:=[0,L) \times S$.
Define the closed subspace ${\cal L} :=\{w(s) \, 1: w \in L^2[0,L) \} \subset L^2(Q)$.
Note that this subspace is directly related to the fact
that the first eigenvalue of the Neumann Laplacian in a bounded region is zero (and the
constant function is the corresponding eigenfunction).
Consider 
the unitary operators ${\cal X}_\varepsilon$ and $\Pi_\varepsilon$ defined by 
(\ref{uniopereduc}) and (\ref{pilast}), respectively, in Section \ref{reductiondim}.
Our main tool to find an asymptotic behavior for $E_n(\varepsilon, \theta)$
is given by
\begin{Theorem}\label{reductionofdimension}
There exists a number $K > 0$ so that, for all $\varepsilon > 0$ small enough,
\[ \sup_{\theta \in {\cal C}}
\left\{ \left\|{\cal X}_\varepsilon^{-1}\left(T_\varepsilon^\theta \right)^{-1} {\cal X}_\varepsilon 
- \left(\Pi_\varepsilon^{-1}(T^\theta)^{-1} \Pi_\varepsilon \oplus {\bf 0} \right) \right\| \right\} \leq K \, \varepsilon,\]
where ${\bf 0}$ is the null operator on the subspace ${\cal L}^\perp$.
\end{Theorem}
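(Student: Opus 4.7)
The plan is to prove norm resolvent convergence by decomposing the Hilbert space $L^2(Q, h^2\beta_\varepsilon \ds\dy)$ according to the splitting $\mathcal{L} \oplus \mathcal{L}^\perp$ and carrying out a Schur complement reduction. The underlying physics is that the Neumann Laplacian $-\mathrm{div}_y(\beta_\varepsilon\nabla_y\cdot)$ on the cross section has $0$ as a simple lowest eigenvalue (with the constant eigenfunction spanning $\mathcal{L}$) and a spectral gap $\mu_2>0$ to the next eigenvalue. After the $\varepsilon^{-2}$-rescaling, this gap translates into the bound $\|(T_\varepsilon^\theta)^{-1}|_{\mathcal{L}^\perp}\| = O(\varepsilon^2)$, which already accounts for the $\mathbf 0$ block on $\mathcal{L}^\perp$ with an error better than the required $K\varepsilon$.

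To make this precise, I would let $P_\varepsilon$ be the orthogonal projection onto $\mathcal{L}$ in $L^2(Q, h^2\beta_\varepsilon \ds\dy)$ and write $T_\varepsilon^\theta$ in $2\times 2$ block form
\[
T_\varepsilon^\theta \;=\; \begin{pmatrix} A_\varepsilon^\theta & B_\varepsilon^\theta \\ (B_\varepsilon^\theta)^* & D_\varepsilon^\theta \end{pmatrix},
\]
where, from (\ref{operatorfibraintro}), $D_\varepsilon^\theta \geq \mu_2/\varepsilon^2 + O(1)$, the off-diagonal blocks $B_\varepsilon^\theta$ are uniformly bounded relative to the graph norm of $A_\varepsilon^\theta$ (they come from the first-order interaction $\partial_s + \mathrm{div}_y R^h + i\theta$, which couples constants in $y$ to their orthogonal complement only through lower-order terms in $\varepsilon$), and $A_\varepsilon^\theta$ is obtained by averaging the coefficients of $T_\varepsilon^\theta$ over $S$. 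The Schur complement gives
\[
(T_\varepsilon^\theta)^{-1}|_{\mathcal{L}} \;=\; \bigl(A_\varepsilon^\theta - B_\varepsilon^\theta (D_\varepsilon^\theta)^{-1}(B_\varepsilon^\theta)^*\bigr)^{-1},
\]
and since $(D_\varepsilon^\theta)^{-1}$ has norm $O(\varepsilon^2)$, a second resolvent identity reduces the problem to comparing $(A_\varepsilon^\theta)^{-1}$ with $\Pi_\varepsilon^{-1}(T^\theta)^{-1}\Pi_\varepsilon$ up to $O(\varepsilon^2)$. The off-diagonal blocks and the $\mathcal{L}^\perp$-block of $(T_\varepsilon^\theta)^{-1}$ are of size $O(\varepsilon^2)$ as well, hence admissible.

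To identify the effective operator on $\mathcal{L}$, I would Taylor-expand $\beta_\varepsilon = 1 - \varepsilon k(s)(y_1\cos\alpha + y_2\sin\alpha)$ in $\varepsilon$. For $\psi = w(s)\cdot 1 \in \mathcal{L}$ the $y$-linear terms vanish after averaging over $S$ (assuming, without loss of generality, that $S$ has centred coordinates, or after absorbing the constant shift into the analytic family via ${\cal W}_\theta$), so the leading part of $A_\varepsilon^\theta$ becomes $-\frac{1}{h^2}(\partial_s + i\theta)h^2(\partial_s+i\theta) + c$ up to an $O(\varepsilon)$ remainder produced by the $\mathrm{div}_y R^h$ term and by the $\varepsilon$-dependence of $\beta_\varepsilon$. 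The conjugations ${\cal X}_\varepsilon$ (handling the weight $h^2\beta_\varepsilon$) and $\Pi_\varepsilon$ (implementing the change of function $w\mapsto w/h$) are designed precisely so that this operator conjugates to $(-i\partial_s+\theta)^2 + h''/h + c = T^\theta$, the potential $h''/h$ emerging from the usual ``Hardy-type'' commutator $h^{-1}\partial_s^2 h$. A final second resolvent identity, together with the uniform lower bound $T^\theta \geq c > 0$, then yields the claimed $O(\varepsilon)$ norm estimate on the $\mathcal{L}$-block.

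The principal obstacle I anticipate is making all bounds uniform in $\theta \in \mathcal{C}$, particularly the $\theta$-dependent quasi-periodic boundary condition $\partial^{Rh}\psi/\partial N = -i\theta\, h^2\beta_\varepsilon^{-1}\langle R^h, N\rangle \psi$ appearing in the domain of $T_\varepsilon^\theta$. After passing to $\tilde T_\varepsilon^\theta = {\cal W}_\theta T_\varepsilon^\theta {\cal W}_\theta^{-1}$ from Section \ref{analyticity}, whose domain is $\theta$-independent, the $\theta$-dependence enters only through bounded multiplicative terms, so the constants in the Schur complement and Taylor expansion steps depend continuously on $\theta$ over the compact set $\mathcal{C}$ and hence can be taken uniform. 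The delicate bookkeeping lies in propagating this uniformity through the boundary integrals in the quadratic form, which is the step that will require the most care.
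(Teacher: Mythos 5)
Your high-level strategy matches the paper's: both proofs split $L^2(Q,\cdot)$ as $\mathcal{L}\oplus\mathcal{L}^\perp$, exploit the cross-sectional Neumann spectral gap (so that the restriction to $\mathcal{L}^\perp$ has an $\varepsilon^{-2}$-large lower bound and hence contributes only $O(\varepsilon^2)$ to the resolvent), and reduce the $\mathcal{L}$-block to the one-dimensional operator $T^\theta$. The paper phrases the reduction through an abstract quadratic-form comparison result (Proposition 3.1 of \cite{solomyak}), which packages the Schur-complement mechanics you describe; it also inserts a technically convenient intermediate form $m_\varepsilon^\theta$ (replacing $1/\beta_\varepsilon$ by $\beta_\varepsilon$ in the longitudinal term) controlled by a separate $O(\varepsilon)$ perturbation estimate, and normalizes the 1D problem against the weight $a_\varepsilon(s)=\int_S\beta_\varepsilon\,\dy$ via $\Pi_\varepsilon$ rather than assuming the cross section is centred.

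The genuine gap is in your treatment of the off-diagonal coupling. You assert that $B_\varepsilon^\theta$ is ``uniformly bounded relative to the graph norm of $A_\varepsilon^\theta$'' and that it couples $\mathcal{L}$ to $\mathcal{L}^\perp$ ``only through lower-order terms in $\varepsilon$,'' but this smallness is precisely what has to be proved, and it is the step that drives the $O(\varepsilon)$ rate (the $\mathcal{L}^\perp$-block alone already gives $O(\varepsilon^2)$). The coupling comes from the longitudinal part $\int_Q \beta_\varepsilon\,\overline{(\partial_s w + h_\theta w)}\,\partial^{Rh}_{s,y}\eta$, and the term $\int_Q\beta_\varepsilon\,\overline{(\partial_s w + h_\theta w)}\,\partial_s\eta$ is \emph{not} obviously small: $\partial_s\eta$ need not be orthogonal to $\mathcal{L}$. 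The paper's key device is to use the orthogonality $\int_S\beta_\varepsilon(s,y)\eta(s,y)\,\dy=0$ (a.e.\ $s$), differentiate it in $s$, and integrate by parts to obtain $\int_S\beta_\varepsilon\partial_s\eta\,\dy=-\int_S\partial_s\beta_\varepsilon\,\eta\,\dy$, where $\partial_s\beta_\varepsilon=O(\varepsilon)$; together with the bound $\int_Q\beta_\varepsilon|\nabla_y\eta|^2\,\ds\dy\le\varepsilon^2 c_2^2\,m_\varepsilon^\theta(\eta)$ for the $\langle\nabla_y\eta,R^h\rangle$ contribution, this yields the crucial cross-term estimate $|m_\varepsilon^\theta(w\,1,\eta)|\le K\varepsilon\,(n_\varepsilon^\theta(w))^{1/2}(m_\varepsilon^\theta(\eta))^{1/2}$. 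Without this mechanism your Schur-complement argument does not close. Separately, the ``WLOG $S$ has centred coordinates'' remark is not needed in the paper (the non-centred contribution to $\int_S\beta_\varepsilon\dy$ is handled by the $\Pi_\varepsilon$ normalization) and should not be waved away by invoking ${\cal W}_\theta$, which is the gauge transformation removing the $\theta$-dependence from the domain, not a centring change of variables.
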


Note that the effective operator $T^\theta$ depends only on a potential induced by 
the deformation $h(s)$.
The bend and twist effects
do not influence
$T_\varepsilon^\theta$. 
This situation change if the Dirichlet condition is considered at the boundary $\partial \Omega_\varepsilon$; see \cite{vema} for a comparison of results.

The spectrum of $T^\theta$ is purely discrete; 
denote by $\nu_n(\theta)$ its $n$th eigenvalue counted with multiplicity.
Let ${\cal K}$ be
a compact subset of ${\cal C}$ which contains an open interval and does not contain the points
$\pm \pi/L$ and $0$.
Given $E>0$, without lost of generality, we can suppose that,
for all $\theta \in {\cal K}$,
the spectrum of $T_\var^\theta$ below $E$ consists of exactly $n_0$ eigenvalues
$\{E_n(\var, \theta)\}_{n=1}^{n_0}$.
As a consequence of Theorem \ref{reductionofdimension},

\begin{Corollary}\label{asymptoticbehaviournew}
For any  $n_0 \in \mathbb N$, there exists $\varepsilon_{n_0} > 0$ so that, 
for all 
$\varepsilon \in (0, \varepsilon_{n_0})$,
\begin{equation}\label{asymptoticintroduction}
E_n(\varepsilon, \theta) =  \nu_n(\theta) + O(\varepsilon),
\end{equation}
holds for each $n=1,2, \cdots, n_0$, uniformly in ${\cal K}$.
\end{Corollary}

\vspace{0.3cm}
\noindent
{\bf Proof of Theorem \ref{maintheoremintrod}:}
Given $E >0$ we can suppose that, for all 
$\theta \in {\cal K}$, the spectrum
of $T_\varepsilon^\theta$ below $E$ consists of exactly $n_0$ eigenvalues 
$\{E_n(\varepsilon, \theta)\}_{n=1}^{n_0}$. 
As already mentioned, the considerations of Section \ref{analyticity} ensure that
$E_n(\varepsilon, \theta)$, $n=1, 2, \cdots, n_0$, are real analityc functions.
The next step is to show that each $E_n(\varepsilon, \theta)$ 
is nonconstant.
Consider the functions $\nu_n(\theta)$, $\theta \in {\cal K}$.
By Theorem XIII.89 in \cite{reedsimon4}, they are  nonconstant.
By Corollary \ref{asymptoticbehaviour}, there exists
$\varepsilon_E > 0$  so that
(\ref{asymptoticintroduction}) holds true for $n=1,2, \cdots, n_0$, 
uniformly in $\theta \in {\cal K}$,
for all $\varepsilon \in (0, \varepsilon_E)$.
Note that $\varepsilon_E > 0$ depends on $n_0$, i.e.,
the thickness of the tube depends on the length
of the energies to be covered.
By Section XIII.16 in \cite{reedsimon4}, the conclusion follows.

\vspace{0.3cm}
As already mentioned, the spectrum of $-\Delta_{\Omega_\varepsilon}^N$  
coincides with the union of  bands; see (\ref{sigmaband}).
It is natural to question  the existence of gaps in its structure.
This subject was studied in \cite{nazarov}. In that work,
the author ensured the existence of gaps. 
However, 
we give an alternative proof for this result.

At first,
it is possible to organize the eigenvalues $\{E_n(\var, \theta)\}_{n \in \mathbb N}$ of  $T_\varepsilon^\theta$ 
in order to obtain a non-decreasing sequence. We keep the same notation and write
\[E_1(\var, \theta) \leq E_2(\var, \theta) \leq \cdots \leq E_n(\var, \theta) \cdots, \quad
\theta \in {\cal C}.\]
In this step 
the functions $E_n(\var, \theta)$ are 
continuous and piece-wise analytic in ${\cal C}$ (see Chapter $7$ in \cite{kato});
each $E_n(\var, {\cal C})$ is either a closed interval or a one point set.
In this case, similar to Corollary \ref{asymptoticbehaviournew}, we have

\begin{Corollary}\label{asymptoticbehaviour}
For any  $n_0 \in \mathbb N$, there exists $\varepsilon_{n_0} > 0$ so that, 
for all 
$\varepsilon \in (0, \varepsilon_{n_0})$,
\begin{equation}\label{asymptoticintroduction}
E_n(\varepsilon, \theta) =  \nu_n(\theta) + O(\varepsilon),
\end{equation}
holds for each $n=1,2, \cdots, n_0$, uniformly in ${\cal C}$.
\end{Corollary}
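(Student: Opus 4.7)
The plan is to extract eigenvalue asymptotics from the norm-resolvent estimate of Theorem \ref{reductionofdimension} via the min--max principle. Write
\[
A_\varepsilon^\theta := \mathcal{X}_\varepsilon^{-1}(T_\varepsilon^\theta)^{-1}\mathcal{X}_\varepsilon, \qquad
B_\varepsilon^\theta := \Pi_\varepsilon^{-1}(T^\theta)^{-1}\Pi_\varepsilon \oplus \mathbf{0};
\]
both are bounded, compact and self-adjoint, and Theorem \ref{reductionofdimension} asserts $\|A_\varepsilon^\theta - B_\varepsilon^\theta\| \le K\varepsilon$ uniformly in $\theta \in \mathcal{C}$. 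Having fixed $c$ large enough that $T_\varepsilon^\theta$ and $T^\theta$ are uniformly positive (this only shifts each spectrum by a common constant and does not affect the claimed asymptotics), the positive eigenvalues of $A_\varepsilon^\theta$ in decreasing order are $\{1/E_n(\varepsilon,\theta)\}_{n\ge 1}$, while those of $B_\varepsilon^\theta$ are $\{1/\nu_n(\theta)\}_{n\ge 1}$, together with an extra zero eigenvalue of infinite multiplicity from the $\mathbf{0}$ block that sits below every $1/\nu_n(\theta)$.

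The min--max characterization of the $n$th largest eigenvalue of a compact self-adjoint operator then yields
\[
\left| \frac{1}{E_n(\varepsilon,\theta)} - \frac{1}{\nu_n(\theta)} \right| \;\le\; \|A_\varepsilon^\theta - B_\varepsilon^\theta\| \;\le\; K\varepsilon
\]
for every $n \ge 1$ and every $\theta \in \mathcal{C}$. To turn this into a bound on $E_n(\varepsilon,\theta) - \nu_n(\theta)$ I need uniform upper bounds on both sequences. Each $\theta \mapsto \nu_n(\theta)$ is continuous on the compact zone $\mathcal{C}$, so for any fixed $n_0$ there is $M_{n_0} > 0$ with $\nu_n(\theta) \le M_{n_0}$ for all $n \le n_0$ and $\theta \in \mathcal{C}$; equivalently $1/\nu_n(\theta) \ge 1/M_{n_0}$. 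Choosing $\varepsilon_{n_0}$ so small that $K\varepsilon_{n_0} \le 1/(2M_{n_0})$, the estimate above forces $1/E_n(\varepsilon,\theta) \ge 1/(2M_{n_0})$ whenever $n \le n_0$ and $\varepsilon \in (0,\varepsilon_{n_0})$, whence $E_n(\varepsilon,\theta) \le 2M_{n_0}$ as well.

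Multiplying $|1/E_n(\varepsilon,\theta) - 1/\nu_n(\theta)| \le K\varepsilon$ through by $E_n(\varepsilon,\theta)\,\nu_n(\theta)$ and using the uniform bound $E_n(\varepsilon,\theta)\,\nu_n(\theta) \le 2M_{n_0}^2$, I obtain
\[
|E_n(\varepsilon,\theta) - \nu_n(\theta)| \;\le\; 2KM_{n_0}^2\,\varepsilon,
\]
uniformly in $n \le n_0$ and $\theta \in \mathcal{C}$, which is precisely (\ref{asymptoticintroduction}).

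I do not expect any genuine obstacle, as Theorem \ref{reductionofdimension} is doing the real work. The only point that requires a moment of care is to check that the $n$th largest eigenvalue of $B_\varepsilon^\theta$ is genuinely $1/\nu_n(\theta)$ and not an eigenvalue from the $\mathbf{0}$ summand; this is automatic because $\mathbf{0}$ contributes only the value $0$, which lies below all $1/\nu_n(\theta)$, but the same observation is what dictates the $n_0$-dependence of the threshold $\varepsilon_{n_0}$ through the uniform bound $M_{n_0}$ on the first $n_0$ eigenvalues of the effective operator $T^\theta$.
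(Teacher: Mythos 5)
Your proof is correct and follows essentially the same route as the paper: obtain the uniform estimate $|1/E_n(\varepsilon,\theta)-1/\nu_n(\theta)|\le K\varepsilon$ from the norm-resolvent bound, then use compactness of $\mathcal{C}$ to bound $\nu_n$, deduce a uniform bound on $E_n$ for $n\le n_0$, and multiply through. The only cosmetic difference is that you derive the eigenvalue stability estimate directly from the min--max principle, whereas the paper cites Corollary 2.3 of Gohberg--Krein for the same step; your remark about fixing $c$ ``large enough'' is unnecessary since $T_\varepsilon^\theta\ge c>0$ and $T^\theta\ge c>0$ already hold for any $c>0$, but this does not affect the argument.
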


As a consequence

\begin{Theorem}\label{corgap}
Suppose that $h''(s)/h(s)$ is not constant. Then, there exist $n_1 \in \mathbb N$, $\varepsilon_{n_1+1} > 0$ and $C_{n_1}>0$
so that, for all $\varepsilon \in (0, \varepsilon_{n_1+1})$,
\[\min_{\theta \in {\cal C}} E_{n_1+1} (\varepsilon, \theta) 
-
\max_{\theta \in {\cal C}} E_{n_1} (\varepsilon, \theta) =
C_{n_1} + O(\varepsilon).\]
\end{Theorem}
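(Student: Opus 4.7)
The plan is a two-step reduction. First, use Corollary \ref{asymptoticbehaviour} to transfer the gap question about $\sigma(-\Delta_{\Omega_\varepsilon}^N)$ to the bands of the effective one-dimensional operator $T^\theta$. Second, invoke classical Hill-operator theory to guarantee that $T^\theta$ has at least one open gap whenever $h''/h$ is nonconstant.

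For the first step, note that the fibered family $\{T^\theta\}_{\theta \in \mathcal{C}}$ in (\ref{effectiveoperator}) is, up to a unitary conjugation by the quasi-periodic phase $e^{i\theta s}$, precisely the Floquet--Bloch decomposition of the Hill operator
\[
H := -\frac{d^2}{ds^2} + V(s) + c, \qquad V(s) := \frac{h''(s)}{h(s)},
\]
acting in $L^2(\mathbb{R})$ with $L$-periodic potential $V$. Consequently $\sigma(H) = \bigcup_{n=1}^{\infty}[\min_{\theta \in \mathcal{C}}\nu_n(\theta),\, \max_{\theta \in \mathcal{C}}\nu_n(\theta)]$.

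For the second step, by hypothesis $V = h''/h$ is a nonconstant smooth $L$-periodic function. Borg's classical inverse spectral theorem (see, e.g., Theorem XIII.91 in \cite{reedsimon4}) asserts that $\sigma(H)$ is a half-line if and only if $V$ is constant; hence at least one of the gaps between consecutive bands of $H$ is open. Let $n_1 \in \mathbb{N}$ be the smallest index for which this occurs, and set
\[
C_{n_1} := \min_{\theta \in \mathcal{C}}\nu_{n_1+1}(\theta) - \max_{\theta \in \mathcal{C}}\nu_{n_1}(\theta) > 0.
\]

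Finally, apply Corollary \ref{asymptoticbehaviour} with $n_0 = n_1+1$: there exist $\varepsilon_{n_1+1}>0$ and $K > 0$ such that, for all $\varepsilon \in (0, \varepsilon_{n_1+1})$,
\[
\sup_{\theta \in \mathcal{C}} \bigl| E_n(\varepsilon,\theta) - \nu_n(\theta) \bigr| \leq K\varepsilon, \qquad n=1,\ldots,n_1+1.
\]
Minimizing over $\theta$ in the inequality for $n=n_1+1$ and maximizing over $\theta$ for $n=n_1$, and subtracting, one obtains
\[
\bigl| \min_{\theta \in \mathcal{C}} E_{n_1+1}(\varepsilon,\theta) - \max_{\theta \in \mathcal{C}} E_{n_1}(\varepsilon,\theta) - C_{n_1} \bigr| \leq 2K\varepsilon,
\]
which is the claimed $C_{n_1} + O(\varepsilon)$ identity. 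The principal obstacle is the nonvanishing of the effective gap $C_{n_1}$: a priori the nonconstancy of $h''/h$ need not force a gap to open, and it is exactly here that Borg's theorem is indispensable. Once this qualitative input is in place, the $\theta$-uniform convergence afforded by Corollary \ref{asymptoticbehaviour} completes the argument.
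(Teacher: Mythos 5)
Your argument is correct and follows essentially the same route as the paper: identify the fibers $T^\theta$ with the Floquet--Bloch decomposition of the Hill operator $T = -d^2/ds^2 + h''/h + c$, invoke Borg's theorem to open at least one gap when $h''/h$ is nonconstant, and then transfer the gap to the thin tube via the $\theta$-uniform estimate of Corollary~\ref{asymptoticbehaviour}. The only cosmetic difference is that the paper phrases Borg's theorem in terms of coincidence of periodic/antiperiodic eigenvalues and packages the transfer step as an explicit intermediate Corollary (\ref{corcorrected}), whereas you use the ``spectrum is a half-line iff $V$ is constant'' form and apply Corollary~\ref{asymptoticbehaviour} directly; the content is the same.
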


Theorem \ref{corgap} ensures  that at least one gap appears in the spectrum 
$\sigma(-\Delta_{\Omega_\varepsilon}^N)$, for all $\varepsilon > 0$ small enough.
We highlighted that the deformation at the boundary $\partial \Omega_\varepsilon$ caused by
$h(s)$ generates this effect.
The proof of Theorem \ref{corgap} is based on arguments of
\cite{borg, yoshitomi}.

\begin{Remark}{\rm
Due to the  characteristics of $h$,
if $h$ is not constant, we always have that $h''/h$ is not constant.
In fact, suppose $h''/h=C$.
Without loss of generality, assume $C>0$.
By condition (\ref{conhintro}), we must have $h'' > 0$, i.e., $h'$ is strictly increasing. But this does not occur
because $h'$ is $L$-periodic.

}
\end{Remark}

\begin{Remark}{\rm
Under conditions of Theorems \ref{maintheoremintrod} and \ref{corgap},
we have the existence at least one gap in 
the absolutely continuous spectrum of $-\Delta_{\Omega_\varepsilon}^N$. In fact,
it is enough to choose $\varepsilon > 0$ small enough and an appropriate
$E>0$.
}
\end{Remark}

Although we have proved Theorem \ref{maintheoremintrod} in this Introduction, the proof of Theorem 
\ref{corgap}
will be presented in Section \ref{sectionbandgaps}.


This work is written as follows. 
In Section \ref{geometrydomain}
we construct with details the tube $\Omega_\varepsilon$.
In Section \ref{coordinates} we perform a change of coordinates
so that $\Omega_\varepsilon$ is homeomorphic to the straight tube $\mathbb R \times S$;
as well as the expression for the quadratic form (\ref{quadformneuintro}) in the new variables.
In Section \ref{floquetbloch} we realize the Floquet-Bloch decomposition mentioned in
(\ref{operatorfibraintro}).
In Section \ref{analyticity} we discuss analyticity properties of the functions $E_n(\varepsilon, \theta)$ an
$\psi_n(\varepsilon, \theta)$, $n=1,2,3,\cdots$.
Section \ref{sectioncross} is dedicated to study the Neumann problem in the cross section $S$.
Section \ref{reductiondim} is intended at proofs of Theorem 
\ref{reductionofdimension} and Corollary \ref{asymptoticbehaviour} (the proof of Corollary \ref{asymptoticbehaviournew}
is similar to the proof of Corollary \ref{asymptoticbehaviour}, it will omitted in this text).
In Section \ref{sectionbandgaps} we prove Theorem \ref{corgap}.
A long the text, the symbol $K$ is used to denote different constants and it never depends on 
$\theta$.


\section{Geometry of the domain}\label{geometrydomain}

Let $r : \mathbb R \to \mathbb R^3$ be a simple $C^3$ curve in~$\mathbb R^3$ parametrized by
its arc-length parameter~$s$. We suppose that $r$ is periodic, i.e., there exists $L>0$ 
and a nonzero vector $\vec{u}$
so that
\[r(s+L)=\vec{u}+r(s), \qquad \forall s \in \mathbb R.\]

The curvature of~$r$ at the position~$s$ is $k(s) := \| r''(s)\|$. We choose the usual orthonormal triad of vector 
fields $\{T(s), N(s), B(s)\}$, the so-called Frenet frame, given the tangent, normal
and binormal vectors, respectively, moving along the curve and defined by
\begin{equation}\label{frenet}
T = r'; \quad N = k^{-1} T'; \quad B = T \times N.
\end{equation}
To justify the construction~\eqref{frenet}, it is assumed that $k > 0$, but if~$r$ has a piece of a straight line (i.e., $k = 0$ identically in this piece), usually one can choose a constant Frenet frame instead. It is possible to combine constant Frenet frames with the Frenet frame~\eqref{frenet} 
and so obtaining a global $\CC^2$ Frenet frame; see \cite{klingenberg}, Theorem 1.3.6. In each situation  we assume that a global Frenet frame exists and that the Frenet equations are satisfied, that is,
\begin{equation}\label{frame}
\left(\begin{array}{c}
T' \\
N' \\
B'
\end{array}\right) =
\left( 
\begin{array}{ccc}
0 & k & 0 \\
-k & 0 & \tau \\
0 & -\tau & 0 
\end{array}
\right)
\left(\begin{array}{c}
T \\
N \\
B
\end{array}\right),
\end{equation}
where~$\tau(s)$ is the torsion of $r(s)$, actually defined by~\eqref{frame}. 
Let $\alpha: \mathbb R \to \mathbb R$ be a $L$-periodic and  $C^2$ function
so that $\alpha(0) = 0$, and $S$ an open, bounded,  connected and smooth (nonempty) subset of~$\mathbb R^2$. 
Let $h: \mathbb R \to \mathbb R$ be a $L$-periodic and $C^2$ function satisfying (\ref{conhintro}); see Introduction.
For~$\var > 0$ small enough and $y = (y_1, y_2) \in S$, write
$$\vec{x}(s, y) = r(s) + \var  h(s) y_1 N_\alpha (s) + \var  h(s) y_2 B_\alpha(s)$$
and consider the domain
$$\Omega_\var = \{\vec{x}(s,y) \in \mathbb R^3: s \in \mathbb R, y=(y_1, y_2) \in S\},$$
where
\begin{eqnarray*}
N_\alpha(s) & : = &  \cos \alpha(s) N(s) + \sin \alpha(s) B(s), \\
B_\alpha(s) & := &  - \sin \alpha(s) N(s) + \cos \alpha(s) B(s).
\end{eqnarray*}

Roughly speaking, this tube
$\Omega_\var$ 
is obtained by putting the region $\var h(s) S$ along the curve
$r(s)$, which is simultaneously rotated by an angle $\alpha(s)$ with respect to the
cross section at the position $s= 0$.

\section{Change of coordinates}\label{coordinates}

Consider 
the Neumann Laplacian $-\Delta_{\Omega_\varepsilon}^N$, i.e., the
self-adjoint operator associated with the
quadratic form
\[ 	b_{\varepsilon}(\psi):=\int_{\Omega_{\varepsilon}}|\nabla\psi|^2 {\rm d}\vec{x},\quad \dom b_{\varepsilon}=H^1(\Omega_{\varepsilon}).\]

Fix a number $c> 0$.
For technical reasons, we  consider the quadratic form
 \begin{equation}\label{quadraticformneumann}
 	d^c_{\varepsilon}(\psi):=\int_{\Omega_{\varepsilon}} 
 	\left(|\nabla\psi|^2 + c |\psi|^2 \right) \ds \dy,\quad 
 	\dom d_\varepsilon^c =H^1(\Omega_{\varepsilon}).
 \end{equation}
For simplicity of notation, the symbol $c$ will be omitted;
$d_\varepsilon(\psi):=d_\varepsilon^c(\psi)$.

In this section we perform a change of the variables so that the integration region in 
(\ref{quadraticformneumann}), and consequently the domain of the quadratic form $d_\varepsilon(\psi)$, 
does not depend on $\varepsilon$. For this, consider the mapping 
$$\begin{array}{cccl}
F_{\varepsilon}: & \mathbb{R} \times  S & \rightarrow & \Omega_{\varepsilon}\\
			& (s,y_1,y_2) & \mapsto & r(s)+ \varepsilon h(s) y_1 N_\alpha(s) + \varepsilon h(s) y_2 B_\alpha(s)
\end{array}.$$
Since $h \in  L^{\infty}(\mathbb{R})$, $F_{\varepsilon}$ will be a (global) diffeomorphism for 
$\varepsilon > 0$ small enough. 

In the new variables the domain of $d_\varepsilon(\psi)$ turns to be $H^1(\mathbb{R}\times S)$. On the other hand, the price to be paid is a nontrivial Riemannian metric $G=G^{\alpha,h}_{\varepsilon}$ which is induced by $F_\varepsilon$ i.e., 
$$G=(G_{ij}),\quad G_{ij}=\langle e_i,e_j\rangle, \quad 1\leq i, j\leq 3,$$
where
$$e_1=\frac{\partial F_{\varepsilon}}{\partial s}, \quad e_2=\frac{\partial F_{\varepsilon}}{\partial y_1}, \quad e_3=\frac{\partial F_{\varepsilon}} {\partial y_2}.$$

Some calculations show that in the Frenet frame
\[J:=\left(\begin{matrix}
		e_1\\
		e_2\\
		e_3
	\end{matrix}\right)=\left(\begin{matrix}
	\beta_\varepsilon & \sigma_{\varepsilon} & \delta_{\varepsilon}\\
	0 & \varepsilon h\cos\alpha & \varepsilon h\sin \alpha \\
	0 & -\varepsilon h\sin \alpha & \varepsilon h\cos\alpha
\end{matrix}\right),\]
where  $\beta_\varepsilon(s,y)$ is given by (\ref{defpeso}) in the Introduction, and
\begin{eqnarray*}
\sigma_\varepsilon(s,y) & := & -\varepsilon h(s)(\tau+\alpha')(s)\langle z^{\bot}_{\alpha}(s),y\rangle+ \varepsilon h'(s)\langle z_{\alpha}(s),y\rangle,\\
\delta_\varepsilon(s,y) & := & \varepsilon h(s)(\tau+\alpha')(s)\langle z_{\alpha}(s),y\rangle+\varepsilon h'(s)\langle z^{\bot}_{\alpha}(s),y\rangle,\\
z_{\alpha}(s) & := & (\cos\alpha(s),-\sin\alpha(s)), \\
z^{\bot}_{\alpha}(s) & :=  & (\sin\alpha(s),\cos\alpha(s)).
\end{eqnarray*}

The inverse matrix of $J$ is given by 
$$J^{-1}=\left( \begin{matrix}
\beta_\varepsilon^{-1} & \tilde{\sigma}_{\varepsilon} & \tilde{\delta}_{\varepsilon}\\
0 & (\varepsilon h)^{-1}\cos\alpha & -(\varepsilon h)^{-1}\sin\alpha\\
0 & (\varepsilon h)^{-1}\sin\alpha & (\varepsilon h)^{-1}\cos\alpha 
\end{matrix}\right),$$
where
\[\tilde{\sigma}_\varepsilon(s,y) := \frac{1}{\beta_\varepsilon} \left[(\tau+\alpha')(s) \, y_2-\frac{h'(s)}{h(s)}y_1\right],
\quad 
\tilde{\delta}_{\varepsilon}(s,y) :=  -\frac{1}{\beta_\varepsilon}\left[
(\tau+\alpha')(s) \, y_1-\frac{h'(s)}{h(s)}y_2\right].\]

Note that $JJ^{t}=G$ and $\det J=|\det G|^{1/2}=\varepsilon^2h^2(s) \beta_\varepsilon(s,y)> 0$. 
Thus,
$F_\varepsilon$ is a local diffeomorphism. By requiring that $F_\varepsilon$ is injective (i.e., the tube is not self-intersecting), a global diffeomorphism  is obtained.


Introducing the notation 
\[\|\psi\|^2_G:= \int_{\mathbb{R}\times S}|\psi(s,y)|^2h^2(s)\beta_\varepsilon(s,y) \ds\dy,\]
we obtain a sequence of quadratic forms 
\begin{equation}\label{quadraticfromsemrxs}
	t_\varepsilon(\psi) = \|J^{-1}\nabla\psi\|^2_G + c \| \psi \|_G, \quad 
	\dom t_\varepsilon = H^1(\mathbb R \times S).
	\end{equation}
More precisely, the change of coordinates above is obtained  by the unitary transformation
\[\begin{array}{cccc}
	\Psi_\varepsilon: &  L^2(\Omega_\varepsilon) & \rightarrow & L^2(\mathbb R \times S, 
	h^2 \beta_\varepsilon \ds \dy)\\
	& \psi & \mapsto  & \varepsilon \, \psi \circ F_\varepsilon
	\end{array}.\]
	
After the norms are written out, by (\ref{quadraticfromsemrxs}) we obtain
\[t_\varepsilon(\psi)=
	\int_{\mathbb{R}\times S}
	\left(\frac{h^2}{\beta_\varepsilon}\left|\partial^{Rh}_{s,y}\psi\right|^2+\frac{\beta_\varepsilon}{\varepsilon^2}\left|\nabla_y\psi\right|^2
	+ c \, h^2 \beta_\varepsilon|\psi|^2 \right) \ds \dy,\]
$\dom t_\varepsilon = H^1(\mathbb R \times S)$;
recall the definition of 
$\partial^{Rh}_{s,y}\psi$
in the Introduction.
Note that
$\dom t_\varepsilon$ is a subspace of the Hilbert space $L^2(\mathbb R \times S, h^2 \beta_\varepsilon \ds \dy)$.

Denote by $T_\varepsilon$  the self-adjoint operator associated with the quadratic form $t_\varepsilon(\psi)$. 
In fact, $\Psi_\varepsilon(-\Delta^N_{\Omega_\varepsilon} + c \, \Id)\Psi^{-1}_\varepsilon\psi = T_\varepsilon\psi$, 
$\dom T_\varepsilon = \Psi_\varepsilon(\dom (-\Delta^N_{\Omega_\varepsilon}) )$.
Some calculations show that $T_\varepsilon$ has action and domain given
by (\ref{operatorassociado})  and (\ref{operatorassociadodom}), respectively.
See Appendix A of this work for a discussion about quadratic forms and operators 
associated with them.

\section{Floquet-Bloch decomposition}\label{floquetbloch}

Since the coefficients of $T_\varepsilon$ are periodic with respect to $s$, 
we perform the Floquet -Bloch reduction over the Brillouin zone $\mathcal{C}=[-\pi/L, \pi/L]$. 
For simplicity of notation, we write $\Omega:=\mathbb{R}\times S$ and
\[\mathcal{H}_\varepsilon:=L^2(\Omega,h^2\beta_\varepsilon \ds\dy), 
\quad \mathcal{H}_\varepsilon':= L^2(Q,h^2 \beta_\varepsilon \ds\dy).\]	 
Recall $Q= [0,L) \times S$ and,
for each $\theta \in {\cal C}$, the operator $T_\varepsilon^\theta$ given by 
(\ref{operatorfibraintro}) in the Introduction.

\begin{Lemma}\label{lemmadecfloq}
There exists a unitary operator 
$\mathcal{U}_\varepsilon: \mathcal{H}_\varepsilon \rightarrow \int^\oplus_{\cal C} \mathcal{H}_\varepsilon'\,{\rm d}\theta$, 
so that,
\begin{equation}\label{unitaryfloqlem}
\mathcal{U}_\varepsilon T_\varepsilon \mathcal{U}_\varepsilon^{-1} = \int^{\oplus}_{\mathcal{C}} T^\theta_\varepsilon
\, {\rm d} \theta.
\end{equation}
Furthermore, for each $\theta\in \mathcal{C}$, $T^\theta_\varepsilon$ is self-adjoint.
\end{Lemma}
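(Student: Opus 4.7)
The approach is the standard Floquet--Bloch construction, adapted to the weighted $L^2$ space $\mathcal{H}_\varepsilon$ and to the twisted derivative $\partial^{Rh}_{s,y}$. The first step is to define $\mathcal{U}_\varepsilon$ on, say, $C^\infty_c(\Omega)$ by
\[ (\mathcal{U}_\varepsilon \psi)(\theta, s, y) := \sqrt{\tfrac{L}{2\pi}}\, e^{-i\theta s} \sum_{n\in\Z} e^{-i\theta n L}\, \psi(s+nL, y), \qquad \theta \in {\cal C},\ (s,y) \in Q,\]
and then extend by density. The crucial observation is that $k, \alpha, h$ are $L$-periodic, so the weight $h^2 \beta_\varepsilon$ is $L$-periodic in $s$; this is what allows the standard Parseval identity on $L^2({\cal C})$ (applied fiberwise in $y$) to show that $\mathcal{U}_\varepsilon$ is an isometry from $\mathcal{H}_\varepsilon$ onto $\int_{\cal C}^\oplus \mathcal{H}_\varepsilon'\,{\rm d}\theta$, with inverse
\[ (\mathcal{U}_\varepsilon^{-1}\Phi)(s+nL, y) = \sqrt{\tfrac{L}{2\pi}}\int_{\cal C} e^{i\theta(s+nL)}\Phi(\theta, s, y)\,{\rm d}\theta, \quad s\in[0,L),\ n\in\Z.\]

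Next, I would identify the fiber. The conjugation by $e^{-i\theta s}$ is designed precisely so that $\partial_s \rightsquigarrow \partial_s + i\theta$, while the purely $y$-directional terms inside $\partial^{Rh}_{s,y}$ are untouched; thus $\partial^{Rh}_{s,y} \rightsquigarrow \partial^{Rh}_{s,y} + i\theta$ under $\mathcal{U}_\varepsilon$. Substituting into (\ref{operatorassociado}) and using $L$-periodicity of all coefficients to commute them through the sum over $n$, one recovers exactly (\ref{operatorfibraintro}) on each fiber. The $s$-periodicity conditions $\Phi(\theta,0,\cdot)=\Phi(\theta,L,\cdot)$ and $\partial^{Rh}_{s,y}\Phi(\theta,0,\cdot)=\partial^{Rh}_{s,y}\Phi(\theta,L,\cdot)$ arise automatically from the telescoping of the sum, while the lateral Neumann condition $\partial^{Rh}\psi/\partial N=0$ on $\fronteradetubo$ is converted, through the $e^{-i\theta s}$ factor appearing in $\partial^{Rh}\psi/\partial N$ via the $\langle R^h,N\rangle\partial^{Rh}_{s,y}\psi$ term, into the $\theta$-dependent Robin-type condition displayed in $\dom T^\theta_\varepsilon$.

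Self-adjointness of each $T^\theta_\varepsilon$ would then be established, in parallel with the treatment of $T_\varepsilon$ in Section \ref{coordinates}, by exhibiting it as the self-adjoint operator associated with the closed, densely defined, bounded-below quadratic form
\[ t^\theta_\varepsilon(\Phi) = \int_Q \left( \frac{h^2}{\beta_\varepsilon}\bigl|(\partial^{Rh}_{s,y}+i\theta)\Phi\bigr|^2 + \frac{\beta_\varepsilon}{\varepsilon^2}|\nabla_y \Phi|^2 + c\, h^2\beta_\varepsilon |\Phi|^2 \right){\rm d}s\,{\rm d}y, \]
with domain $\{\Phi\in H^1(Q):\Phi(0,\cdot)=\Phi(L,\cdot)\}$. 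Closedness and boundedness below follow from (\ref{conhintro}) and the uniform positivity of $\beta_\varepsilon$ for $\varepsilon$ small. Integration by parts then reproduces the action (\ref{operatorfibraintro}) and extracts the lateral Robin condition as the natural boundary term.

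The main obstacle is purely bookkeeping: keeping the boundary conditions aligned. The summation over $n$ produces $s$-periodicity on $Q$, but the conjugation by $e^{-i\theta s}$ shifts what was a purely Neumann lateral condition on $\partial S$ into the $\theta$-dependent Robin form in $\dom T^\theta_\varepsilon$, and one must verify that the boundary term arising in the integration by parts for $t^\theta_\varepsilon$ matches exactly the expression (\ref{connesq}) plus the $-i\theta(h^2/\beta_\varepsilon)\langle R^h,N\rangle$ correction. Once this is checked, everything else is routine in the Floquet--Bloch framework.
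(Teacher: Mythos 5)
Your proposal is correct and follows the same route as the paper: the paper defines the identical unitary operator $\mathcal{U}_\varepsilon f(\theta,s,y) = \sum_{n\in\Z}\sqrt{L/(2\pi)}\,e^{-inL\theta - i\theta s}f(s+Ln,y)$, states that ``some calculations, which will be omitted here, lead to the formula,'' and defers self-adjointness to Appendix~A, where it is established precisely by showing that $t^\theta_\varepsilon$ is a closed, densely defined form whose associated self-adjoint operator is $T^\theta_\varepsilon$. If anything, your write-up spells out more of the bookkeeping (the $\partial_s \to \partial_s + i\theta$ conjugation and the emergence of the $\theta$-dependent Robin condition on $\partial S$) than the paper does.
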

\begin{proof}
For $(\theta, s,y) \in {\cal C} \times [0,L) \times S$ and $f \in {\cal H}_\varepsilon$
consider the unitary operator
\[{\cal U}_\varepsilon f(\theta,s,y) := \sum_{n \in \mathbb Z} \sqrt{\frac{L}{2\pi}}
e^{-inL\theta-i\theta s} f(s+Ln,y).\]
Some calculations, which will be omitted here,
lead to the formula (\ref{unitaryfloqlem}). 
For the claim that each $T_\varepsilon^\theta$ is self-adjoint, see Appendix A.
\end{proof}

\begin{Remark}{\rm 
For each $\theta \in {\cal C}$, 
the quadratic form $t_\varepsilon^\theta(\psi)$ 
associated with the operator $T_\varepsilon^\theta$ is given by
\[
t_\varepsilon^\theta(\psi) = \int_Q \left(\frac{h^2}{\beta_\varepsilon}
| \partial_{s,y}^{Rh} \psi + i  \, \theta \psi|^2 + \frac{\beta_\varepsilon}{\varepsilon^2} 
|\nabla_y \psi|^2 + c \,h^2 \beta_\varepsilon |\psi|^2 \right) \ds \dy,\]
\[
\dom t_\varepsilon^\theta = \{ \psi \in H^1(Q): \psi(0, \cdot)=\psi(L,\cdot) \, \, 
\hbox{in} \, \,  L^2(S)  \}.
\]
Again, see Appendix A of this work for a discussion about this subject.}
\end{Remark}

\section{Analyticity properties}\label{analyticity}

The goal of this section is to ensure that, for each $n= 1,2, \cdots$, the
functions $E_n(\varepsilon, \theta)$ and $\psi_n(\varepsilon, \theta)$, defined 
by (\ref{eigenfunctionfloquet}) in the Introduction,
are real analytic functions.

The first step is to perform a change of variables in order to turn the domain
$\dom T_\varepsilon^\theta$ independent of the parameter $\theta$. 

Recall the definitions of  $\partial^{R h}/\partial N$ and $R^h$ given by (\ref{connesq}) and (\ref{rh}), respectively;
see Introduction.
Based on \cite{friedabscon}, let $\mu:Q\rightarrow \mathbb R$ be a real function, smooth in the closed 
set $\overline{Q}$, satisfying
\begin{enumerate}
	\item[(1)] $\mu$ is $L$-periodic with respect to $s$, i.e., $\mu(0,y)=\mu(L,y)$, for all $y\in S$;
	\item[(2)] $\displaystyle \frac{\partial^{R h} \mu}{\partial N} = \frac{h^2}{\beta_\varepsilon} \langle R^h, N\rangle$.
\end{enumerate}

Now, define the unitary operator 
\begin{equation}\label{uniwthetaana}
\begin{array}{cccc}
	{\cal W}_\theta: & \mathcal{H}_\varepsilon' & \rightarrow & \mathcal{H}_\varepsilon'\\
	& \eta & \mapsto & e^{i\theta\mu} \, \eta
	\end{array},
	\end{equation}
and 
the self-adjoint operator
\[\label{operator-with-indep-theta-domain}
	\tilde{T}^\theta_\varepsilon = {\cal W}_\theta T^\theta_{\varepsilon} {\cal W}_\theta^{-1}, \quad
	\dom \tilde{T}^\theta_\varepsilon = {\cal W}_\theta( \dom  T^\theta_\varepsilon).\]

Recall the action of $\partial_{s,y}^{Rh} \psi$ by (\ref{opeesq}) (again, see Introduction of this work).
Some straightforward calculations show that 
\begin{eqnarray*}
	\tilde{T}^\theta_{\varepsilon} \psi :
	&=&
	-\frac{1}{h^2 \beta_\varepsilon} \left(\partial_s+{\rm div}_y R^h  
	+ i \theta(\Id -\partialoperator\mu)\right)  \frac{h^2}{\beta_\varepsilon}  \left(\partial^{Rh}_{s,y}
	+	i\theta(\Id -\partialoperator\mu)\right) \psi \\
	& &
	-\frac{1}{\varepsilon^2 h^2 \beta_\varepsilon}\sum_{j=1}^{2}(\partial_{y_j}-i\theta\partial_{y_j}\mu)
	\beta_\varepsilon (\partial_{y_j}-i\theta\partial_{y_j}\mu) \psi + c \, \psi,
\end{eqnarray*}
and,
\begin{eqnarray*}
\dom \tilde{T}^\theta_\varepsilon & = &  
\Big\{ \psi\in\mathcal{H}^2(Q):
\psi(0,\cdot) =\psi(L,\cdot) \quad \hbox{and} \quad
\partialoperator\psi(0,\cdot)=\partialoperator\psi(L,\cdot) \quad
\hbox{in} \quad L^2(S),\\
& &
\, 
\frac{\partial^{Rh}\psi}{\partial N} =0
\quad \text{in} \quad  L^2([0,L)\times \partial S)\Big\}. 
\end{eqnarray*}

Since the domains $\dom \tilde{T}_\varepsilon^\theta$ do not depend on $\theta$, we have 

\begin{Lemma}\label{analiticlem}
$\{ \tilde{T}^\theta_\varepsilon, \theta \in {\cal C}\} $ is a type $A$ analytic family. 
\end{Lemma}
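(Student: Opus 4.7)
The plan is to verify the two defining properties (see Kato's monograph, Ch.~VII) of a Kato type $A$ analytic family: (i) the operator domain $\mathcal{D}:=\dom\tilde{T}^\theta_\varepsilon$ does not depend on $\theta$; and (ii) for each fixed $\psi\in\mathcal{D}$, the map $\theta\mapsto\tilde{T}^\theta_\varepsilon\psi$ extends to an $\mathcal{H}'_\varepsilon$-valued analytic function on a complex neighborhood of $\mathcal{C}$. Closedness of each $\tilde{T}^\theta_\varepsilon$ is automatic: being unitarily equivalent via $\mathcal{W}_\theta$ to the self-adjoint operator $T^\theta_\varepsilon$ of Lemma \ref{lemmadecfloq}, it is itself self-adjoint, hence closed.

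For (i), I would first observe that because $\mu$ is smooth on $\overline{Q}$ and bounded, multiplication by $e^{i\theta\mu}$ preserves $\mathcal{H}^2(Q)$; and because $\mu$ is $L$-periodic in $s$, both the matching condition $\psi(0,\cdot)=\psi(L,\cdot)$ and its twisted-derivative counterpart $\partialoperator\psi(0,\cdot)=\partialoperator\psi(L,\cdot)$ are preserved under the conjugation (no extra $\theta$-phase appears at the seam $s=0,L$). The crucial step is the boundary condition on $[0,L)\times\partial S$: starting from $\eta=e^{i\theta\mu}\psi\in\dom T^\theta_\varepsilon$ and applying the Leibniz rule together with the defining property~(2) of $\mu$, namely $\partial^{Rh}\mu/\partial N = (h^2/\beta_\varepsilon)\langle R^h,N\rangle$, the original $\theta$-dependent Floquet boundary condition collapses exactly to $\partial^{Rh}\psi/\partial N = 0$, which is the $\theta$-independent condition displayed in $\mathcal{D}$.

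For (ii), inspection of the explicit formula for $\tilde{T}^\theta_\varepsilon$ shows that $\theta$ enters only through the factors $i\theta(\Id-\partialoperator\mu)$ and $-i\theta\,\partial_{y_j}\mu$. Expanding all the products in the two summands of $\tilde{T}^\theta_\varepsilon$ yields an identity of the form
\[
\tilde{T}^\theta_\varepsilon = A_0 + \theta A_1 + \theta^2 A_2,
\]
where $A_0,A_1,A_2$ are $\theta$-independent differential operators of order at most two whose coefficients are bounded on $\overline{Q}$ (thanks to the smoothness of $\mu$, $h$, $\alpha$, and the uniform lower bound on $\beta_\varepsilon$ obtained for $\varepsilon$ small). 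Consequently, for each $\psi\in\mathcal{D}$, $\theta\mapsto\tilde{T}^\theta_\varepsilon\psi$ is a polynomial of degree at most two with values in $\mathcal{H}'_\varepsilon$, hence entire, and in particular analytic on any complex neighborhood of $\mathcal{C}$.

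The main obstacle is the boundary-condition cancellation in step~(i): one must carefully verify that the auxiliary function $\mu$, defined precisely by the inhomogeneous Neumann-type relation~(2), produces exactly the required cancellation on $[0,L)\times\partial S$, and that periodicity in $s$ takes care of the remaining matching conditions at $s=0,L$. Once this is in place, step~(ii) is routine bookkeeping of the $\theta$-powers.
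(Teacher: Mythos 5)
The paper does not actually present a proof of Lemma~\ref{analiticlem}; it defers to Lemma~1 of \cite{vema}. Your argument supplies the standard proof, and its two ingredients --- $\theta$-independence of the operator domain (which is exactly why the conjugation by $\mathcal{W}_\theta = e^{i\theta\mu}$ was introduced, and which the paper already records explicitly in the displayed formula for $\dom\tilde T^\theta_\varepsilon$), plus the observation that $\tilde T^\theta_\varepsilon\psi$ is a polynomial of degree two in $\theta$ --- are precisely the steps one would expect the cited reference to carry out. Your bookkeeping of the $\theta$-powers is correct: expanding the two summands shows that the $\theta^1$ coefficient $A_1$ is a first-order differential operator with bounded coefficients and the $\theta^2$ coefficient $A_2$ is a bounded multiplication operator, while $A_0=\tilde T^0_\varepsilon$.

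There is one small but real gap. You dispose of closedness by noting that each $\tilde T^\theta_\varepsilon$ is unitarily equivalent to the self-adjoint $T^\theta_\varepsilon$. That argument is valid only for real $\theta\in\mathcal{C}$, because $\mathcal{W}_\theta=e^{i\theta\mu}$ ceases to be unitary once $\theta$ is complex. Kato's type (A) notion requires the operators $\tilde T^\theta_\varepsilon$ to be closed (with nonempty resolvent set) for $\theta$ ranging over a complex neighborhood of $\mathcal{C}$, and your step~(ii) explicitly invokes such a complex extension. To close the gap, observe from your own decomposition $\tilde T^\theta_\varepsilon=A_0+\theta A_1+\theta^2 A_2$ that $A_1$ is $A_0$-bounded with relative bound $0$ (first order vs.\ elliptic second order on the fixed domain $\mathcal{D}\subset\mathcal{H}^2(Q)$) and $A_2$ is bounded; hence, for $\theta$ in a complex strip about $\mathbb{R}$, the perturbation $\theta A_1+\theta^2 A_2$ is $A_0$-bounded with bound $<1$, so $\tilde T^\theta_\varepsilon$ is closed (indeed m-sectorial) on $\mathcal{D}$ and has nonempty resolvent set. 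With that addendum your proof is complete and follows the same route as the reference the paper invokes.
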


The proof of Lemma \ref{analiticlem} follows the same steps of the proof of Lemma 1 in \cite{vema}. Because
this, it will not be presented here.

Since the operators $T^\theta_\varepsilon$ and  $\tilde{T}^\theta_\varepsilon$
are unitarily equivalent, they have the same spectrum.
Thus, the eigenvalues  of $\tilde{T}^\theta_\varepsilon$ are given by
$E_n(\varepsilon,\theta)$, $n=1,2,3, \cdots$. 
For each $n=1,2,3, \cdots$, the  corresponding eigenfunction is
\[\tilde{\psi}_n(\varepsilon, \theta):=e^{i\theta\mu}\psi_n(\varepsilon,\theta).\]

Lemma \ref{analiticlem} ensures the analyticity of the functions 
$E_n(\varepsilon, \theta)$, $\tilde{\psi}(\varepsilon, \theta)$, $n=1,2,3, \cdots$.
Consequently, the
analyticity of $\psi_n(\varepsilon, \theta)$, $n= 1,2,3, \cdots$.


\section{Cross section problem}\label{sectioncross}

In this section we investigate the Neumann problem in the cross section $S$ which is  an important
step to prove Theorem \ref{reductionofdimension}.

For each $s \in [0,L)$ and $\varepsilon>0$ consider
the Hilbert space 
${\cal H}_\varepsilon^s:=L^2(S, \beta_\varepsilon\dy)$ which is equipped with the inner product 
$\langle u,v \rangle_{{\cal H}_\varepsilon^s}:=\int_S \overline{u} v \beta_\varepsilon \dy.$
Define the quadratic form
\[q_\varepsilon^s(u):=
\int_S |\nabla_y u |^2 \beta_\varepsilon \dy, \quad \dom  q_\varepsilon^s =H^1(S),\]
and denote by $Q_\varepsilon^s$ the self-adjoint operator associated with it.
The geometric features of $S$ ensure that $Q_\varepsilon^s$ has compact resolvent. 
Denote by $\lambda^n_\varepsilon(s)$ the $n$th eigenvalue of $Q^s_\varepsilon$  counted with multiplicity and 
$u^n_\varepsilon(s)$ the corresponding normalized eigenfunction, i.e.,
\[0=\lambda^1_\varepsilon(s)\leq\lambda^2_\varepsilon(s)\leq\lambda^3_\varepsilon(s)\leq\cdots,\]
and
\[Q_\varepsilon^s u^n_\varepsilon(s)= \lambda^n_\varepsilon(s)u^n_\varepsilon(s),\quad n=1,2,3,\cdots.\]

We pay attention that, for each $s \in [0,L)$ and $\varepsilon > 0$, 
$\lambda^1_\varepsilon(s)=0$ and its corresponding eigenfunction $u^1_\varepsilon(s)$ is constant.

Introduce the unitary operator 
\begin{equation*}
\begin{array}{ccccc}
{\cal V}^s_\varepsilon: & L^2(S) &\rightarrow & {\cal H}_\varepsilon^s\\
& u & \mapsto & \beta^{-1/2}_\varepsilon u
\end{array},
\end{equation*}
and define
\[\tilde{q}^s_\varepsilon(u):= q^s_\varepsilon(V^s_\varepsilon u),  \quad 
\dom \tilde{q}_\varepsilon^s:= H^1(S).\]

Some calculations  show that 
\[\tilde{q}^s_\varepsilon(u):=
\int_S\left|\nabla_yu -{\nabla_y \beta_\varepsilon}({2\beta_\varepsilon})^{-1}u\right|^2 \dy,
\quad \dom \tilde{q}^s_\varepsilon:= H^1(S).\]

Let $-\Delta^N_S$ be the Neumann Laplacian operator in $S$, i.e., the self-adjoint operator associated with the quadratic form 
\[q(u):=\int_{S}|\nabla_y u|^2dy, \quad \dom q = H^1(S).\]
Denote by $\lambda^n$ the $n$th eigenvalue of $-\Delta_S^N$
counted with multiplicity  and  by $u_n$ the  corresponding normalized eigenfunction, i.e., 
\[0=\lambda^1<\lambda^2\leq\lambda^3,\cdots,\]
and 
\[-\Delta^N_Su^n=\lambda^nu^n, \quad n=1,2,3,\cdots.\]

\begin{Theorem}\label{neumanncrosssection}
Fix $c_3 > 0$. There exists $K>0$ so that, for all $\varepsilon > 0$ small enough,
\[ \sup_{s \in [0,L)} \left\{ \| ({\cal V}^s_\varepsilon)^{-1} (Q^s_\varepsilon + c_3 \Id)^{-1} {\cal V}^s_\varepsilon -
(-\Delta^N_S + c_3 \Id)^{-1}\| \right\} \leq K \,\varepsilon.\]
\end{Theorem}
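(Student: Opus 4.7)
The plan is to compare $\tilde Q^s_\varepsilon := ({\cal V}^s_\varepsilon)^{-1} Q^s_\varepsilon {\cal V}^s_\varepsilon$ (which acts on $L^2(S)$ with quadratic form $\tilde q^s_\varepsilon$, so that $({\cal V}^s_\varepsilon)^{-1}(Q^s_\varepsilon+c_3\Id)^{-1}{\cal V}^s_\varepsilon = (\tilde Q^s_\varepsilon + c_3\Id)^{-1}$) directly with $-\Delta^N_S$ via the sesquilinear-form identity for resolvent differences. The key observation is that the first-order perturbation
\[\phi_\varepsilon^s(y) := \frac{\nabla_y \beta_\varepsilon(s,y)}{2\beta_\varepsilon(s,y)}\]
is small uniformly in $s$. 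Indeed, the explicit formula $\beta_\varepsilon = 1 - \varepsilon k(s)(y_1 \cos\alpha(s) + y_2 \sin\alpha(s))$ gives $\nabla_y \beta_\varepsilon = -\varepsilon k(s)(\cos\alpha(s), \sin\alpha(s))$ and $\beta_\varepsilon \geq 1/2$ for $\varepsilon$ small enough, so $\|\phi_\varepsilon^s\|_{L^\infty(S)} \leq K_0\, \varepsilon$ with $K_0$ depending only on $\|k\|_{L^\infty}$, hence independent of $s$ by the $L$-periodicity of $k$ and $\alpha$.

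First I would use the expansion
\[\tilde{q}^s_\varepsilon(u) = q(u) - 2\,\mathrm{Re}\int_S (\nabla_y u \cdot \phi_\varepsilon^s)\bar u\, dy + \int_S |\phi_\varepsilon^s|^2 |u|^2\, dy\]
to show that $\tilde q^s_\varepsilon$ and $q$ are uniformly equivalent for small $\varepsilon$, and to deduce the uniform $H^1$-resolvent bounds
\[\|(\tilde Q^s_\varepsilon + c_3 \Id)^{-1} g\|_{H^1(S)} \leq K\|g\|_{L^2(S)}, \qquad \|(-\Delta^N_S + c_3\Id)^{-1} f\|_{H^1(S)} \leq K\|f\|_{L^2(S)},\]
with $K$ independent of $s$ and of small $\varepsilon$. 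The nontrivial bound is the first, derived from $\tilde q^s_\varepsilon(v) + c_3\|v\|^2 = \langle g,v\rangle \le \|g\|\|v\|$ (controlling $\|v\|$) followed by the triangle inequality $\|\nabla_y v\|_{L^2} \leq \|\nabla_y v - \phi_\varepsilon^s v\|_{L^2} + K_0\,\varepsilon\|v\|_{L^2}$.

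The core estimate then comes from the sesquilinear-form identity: setting $A := -\Delta^N_S + c_3\Id$, $B := \tilde Q^s_\varepsilon + c_3\Id$, $u := A^{-1}f$ and $v := B^{-1}g$, self-adjointness of $A$ and $B$ combined with the representations $a(u,v) = \int_S \nabla_y u \cdot \overline{\nabla_y v}\, dy + c_3\langle u, v\rangle$ and $b(u,v) = \int_S (\nabla_y u - \phi_\varepsilon^s u) \cdot \overline{(\nabla_y v - \phi_\varepsilon^s v)}\, dy + c_3\langle u,v\rangle$ yields
\[\langle (A^{-1} - B^{-1})f, g\rangle = b(u,v) - a(u,v) = -\int_S (\nabla_y u\cdot\phi_\varepsilon^s)\bar v\, dy - \int_S u\,(\phi_\varepsilon^s\cdot\overline{\nabla_y v})\, dy + \int_S |\phi_\varepsilon^s|^2 u\bar v\, dy.\]
Each of these three integrals is bounded, by Cauchy-Schwarz and $\|\phi_\varepsilon^s\|_\infty \leq K_0\,\varepsilon$, by a constant times $\varepsilon\|u\|_{H^1(S)}\|v\|_{H^1(S)}$, and hence by $K'\varepsilon\|f\|_{L^2}\|g\|_{L^2}$ via the uniform $H^1$-resolvent estimates. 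Taking supremum over $f,g$ in the $L^2(S)$ unit ball and then over $s\in[0,L)$ yields the claimed bound.

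The main obstacle is precisely the uniformity in $s$, which is what distinguishes the statement from an off-the-shelf norm-resolvent convergence; all constants above depend on $s$ only through $\|k\|_{L^\infty}$ and similar quantities, which are finite by the continuity and $L$-periodicity of $k$ and $\alpha$, and the lower bound $\beta_\varepsilon \geq 1/2$ is similarly uniform for small $\varepsilon$. Once these uniform controls are in hand, the sesquilinear-form argument directly produces the stated uniform $O(\varepsilon)$ estimate.
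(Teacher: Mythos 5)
Your proof is correct and follows essentially the same route as the paper: both establish the uniform (in $s$) closeness of the cross-section quadratic forms and then pass to norm-resolvent closeness. The only difference is that the paper delegates the form-to-resolvent step to a black-box citation (Theorem 3 in \cite{bdv}, applied to the comparison $|q_\varepsilon^s(u)-q(u)|\le K\varepsilon\,(q(u)+c_3\|u\|^2)$), whereas you conjugate by ${\cal V}_\varepsilon^s$ up front and carry out that step explicitly via the sesquilinear-form identity $\langle (A^{-1}-B^{-1})f,g\rangle = b(u,v)-a(u,v)$ together with the uniform $H^1$ resolvent bounds; this is in effect a self-contained proof of the cited lemma rather than a different strategy.
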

\begin{proof}
At first, we add the constant $c_3>0$ only due to a technical detail.
Some calculations show that there exists a number $K>0$ so that, for all $\varepsilon > 0$ small enough,
\[\left|(q_\varepsilon^s(u) +c_3\|u\|_{L^2(S)} ) - (q(u)+c_3\|u\|_{L^2(S)} ) \right|\leq \varepsilon K  \, (q(u)+c_3\|u\|_{L^2(S)}),\]
$\forall u\in H^1(S), \forall s \in [0,L).$
Now, the result follows by Theorem 3 in \cite{bdv}. 
\end{proof}

As a consequence of Theorem \ref{neumanncrosssection}, for all $\varepsilon > 0$ small enough,
\[\left| \frac{1}{\lambda_\varepsilon^2(s) + c_3} - \frac{1}{\lambda^2+c_3} \right|  
\leq \varepsilon \, K, \quad \forall s \in [0,L).\]
Then,
\[0 < \gamma(\varepsilon) \leq \lambda_\varepsilon^2(s), \quad \forall s \in [0,L),\]
where
$\gamma(\varepsilon):=(\lambda^2-\varepsilon c_3 K (\lambda^2+c_3))/(1+\varepsilon K(\lambda^2+c_3)) \to \lambda^2 > 0$,
as $\varepsilon \to 0$.
Thus, there exists $\tilde{\gamma} > 0$ so that, for all $\varepsilon > 0$ small enough,
\begin{equation}\label{crosssectiongamma}
0< \tilde{\gamma} \leq \gamma(\varepsilon) \leq \lambda_\varepsilon^2(s),
\quad \forall s \in [0,L).
\end{equation}


\section{Proof of Theorem \ref{reductionofdimension}
and Corollary \ref{asymptoticbehaviour}}\label{reductiondim}

Recall ${\cal H}_\varepsilon'= L^2(Q, h^2 \beta_\varepsilon \ds\dy)$.
Consider the Hilbert space $ \tilde{\mathcal{H}}_\varepsilon:= {L}^2(Q,\beta_\varepsilon \ds\dy)$ 
equipped with  the inner product 
$\langle \psi, \varphi \rangle_{\tilde{{\cal H}}_\varepsilon} = \int_Q \overline{\psi} \varphi \beta_\varepsilon \ds\dy.$
At first, we perform a change of variables in order to work in $\tilde{\mathcal{H}}_\varepsilon$. 
This change is given by the unitary operator
\begin{equation}\label{uniopereduc}
\begin{array}{cccc}
{\cal X}_\varepsilon: & \tilde{\mathcal{H}}_\varepsilon & \rightarrow & \mathcal{H}'_\varepsilon \\
& \psi & \mapsto &  h^{-1}\psi
\end{array}.
\end{equation}

We start to study  the quadratic form
\[s^{\theta}_\varepsilon(\psi):= t^\theta_\varepsilon({\cal X}_\varepsilon(\psi)),
	\quad \dom s^\theta_\varepsilon:= {\cal X}^{-1}_\varepsilon( \dom t^\theta_\varepsilon).\]

One can show
\begin{eqnarray*}
s^\theta_\varepsilon(\psi) 
&=& 
\int_Q\frac{h^2}{\beta_\varepsilon}\left|\partial^{Rh}_{s,y}(h^{-1}\psi)+i\theta h^{-1} \psi\right|^2 \ds\dy \\ 
& + &
\int_Q\frac{\beta_\varepsilon}{\varepsilon^2}\left|\nabla_y(h^{-1}\psi)\right|^2 \ds\ dy+c\int_Q \left|h^{-1}\psi\right|^2h^2\beta_\varepsilon
\ds \dy \\
&=&
\int_Q\frac{1}{\beta_\varepsilon}\left|\partial^{Rh}_{s,y}\psi+h_\theta(s)\psi\right|^2 \ds \dy \\
& + &
\int_Q\frac{\beta_\varepsilon}{\varepsilon^2 \,h^2} \left|\nabla_y\psi\right|^2 \ds\dy
+ c \int_Q \left|\psi\right|^2\beta_\varepsilon \ds\dy, 
\end{eqnarray*}
where $h_\theta(s):=i \theta- (h'(s)/h(s))$.

Since $h$ is a bounded and $L$-periodic function, 
\[\dom s_\varepsilon^\theta = \{ \psi \in H^1(Q): \psi(0,\cdot) = \psi(L,\cdot) \,\,\, \hbox{in}  
\,\,\, L^2(S)\}.\]
Here, $H^1(Q)$ is a subspace of the Hilbert space $\tilde{\mathcal{H}}_\varepsilon$.

Denote by $S^\theta_\varepsilon$ the self-adjoint operator associated with the quadratic form 
$s^\theta_\varepsilon(\psi)$. Actually,  ${\rm dom}\, S^\theta_\varepsilon \subset {\rm dom}\, s^\theta_\varepsilon$  and 
\[{\cal X}_\varepsilon^{-1}(T^\theta_\varepsilon){\cal X}_\varepsilon = S^\theta_\varepsilon.\]

On the other hand, we define
\begin{eqnarray*}
m^\theta_\varepsilon(\psi)& : =&
\int_Q{\beta_\varepsilon}\left|\partial^{Rh}_{s,y}\psi+h_\theta(s)\psi\right|^2 \ds \dy \\
& + &
\int_Q\frac{\beta_\varepsilon}{\varepsilon^2 \,h^2} \left|\nabla_y\psi\right|^2 \ds\dy
+ c \int_Q \left|\psi\right|^2\beta_\varepsilon \ds\dy, 
\end{eqnarray*}
$\dom m^\theta_\varepsilon:= \dom s^\theta_\varepsilon$.  
Denote by $M^\theta_\varepsilon$ the self-adjoint operator associated with $m^\theta_\varepsilon(\psi)$.
 
\begin{Proposition}\label{proposition1}
There exists a number $K>0$ so that, for all $\varepsilon>0$ small enough, 
\[ \sup_{\theta \in {\cal C}} \left\{
\|(S^\theta_\varepsilon)^{-1}-(M^\theta_\varepsilon)^{-1}\| \right\}\leq K\varepsilon.\]
\end{Proposition}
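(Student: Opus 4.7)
The strategy mirrors the proof of Theorem \ref{neumanncrosssection}: reduce the resolvent estimate to a \emph{relative} comparison of the quadratic forms $s_\varepsilon^\theta$ and $m_\varepsilon^\theta$, and then invoke Theorem 3 of \cite{bdv} to convert the form estimate into an estimate for the difference of resolvents. The crucial point is that the two forms share the same domain and differ only in the weight of their first integrand ($\beta_\varepsilon^{-1}$ versus $\beta_\varepsilon$), while $\beta_\varepsilon = 1 + O(\varepsilon)$ uniformly on $Q$ by (\ref{defpeso}).

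The first concrete step is to subtract the forms. Since the $\nabla_y$ term and the zero-order term are identical in $s_\varepsilon^\theta$ and $m_\varepsilon^\theta$,
\[ s_\varepsilon^\theta(\psi) - m_\varepsilon^\theta(\psi) = \int_Q \left( \frac{1}{\beta_\varepsilon} - \beta_\varepsilon \right) \left| \partialoperator \psi + h_\theta(s)\, \psi \right|^2 \ds \dy. \]
From (\ref{defpeso}) and the boundedness of $k$, $\alpha$ and $y$ on $Q$, there exist $K_0, \varepsilon_0 > 0$ such that, for every $\varepsilon \in (0, \varepsilon_0)$, $1/2 \le \beta_\varepsilon \le 3/2$ on $Q$ and
\[ \left| \frac{1}{\beta_\varepsilon} - \beta_\varepsilon \right| = \frac{|1 - \beta_\varepsilon^2|}{\beta_\varepsilon} \le K_0\, \varepsilon. \]
Combining these with the trivial inequality $1 \le 2\beta_\varepsilon$ yields
\[ |s_\varepsilon^\theta(\psi) - m_\varepsilon^\theta(\psi)| \le 2 K_0\, \varepsilon \int_Q \beta_\varepsilon \left| \partialoperator \psi + h_\theta(s)\, \psi \right|^2 \ds \dy \le K\, \varepsilon\, m_\varepsilon^\theta(\psi) \]
for all $\psi \in \dom s_\varepsilon^\theta = \dom m_\varepsilon^\theta$, because the first integral of $m_\varepsilon^\theta$ is bounded above by $m_\varepsilon^\theta$ itself.

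To finish, I would check uniform coercivity of both forms and apply \cite[Thm 3]{bdv}. The term $c \int_Q |\psi|^2 \beta_\varepsilon \,\ds\dy \ge (c/2) \|\psi\|_{\tilde{\mathcal{H}}_\varepsilon}^2$ (for $\varepsilon$ small) already provides a $\theta$-independent lower bound for both forms, so $S_\varepsilon^\theta$ and $M_\varepsilon^\theta$ are uniformly invertible in $(\theta, \varepsilon)$. Theorem 3 of \cite{bdv}, applied to the pair $(s_\varepsilon^\theta, m_\varepsilon^\theta)$, then delivers $\|(S_\varepsilon^\theta)^{-1} - (M_\varepsilon^\theta)^{-1}\| \le K \varepsilon$. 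The main thing I need to monitor is the uniformity in $\theta$; this is automatic here, since $h_\theta$ appears identically in both forms (it never gets isolated in any estimate), and the coercivity comes from a $\theta$-independent summand. I do not expect any deeper obstacle beyond this bookkeeping.
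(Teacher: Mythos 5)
Your proposal is correct, and it implements exactly the idea the paper flags as the heart of the matter (``$\beta_\varepsilon\to 1$ uniformly as $\varepsilon\to 0$''). The paper omits this proof by appeal to Theorem~3.1 of \cite{dov4}, whereas you run the argument through Theorem~3 of \cite{bdv} in the same way the paper does for Theorem~\ref{neumanncrosssection}; the underlying mechanism (a relative form estimate $|s_\varepsilon^\theta(\psi)-m_\varepsilon^\theta(\psi)|\le K\varepsilon\, m_\varepsilon^\theta(\psi)$ with a $\theta$-independent coercivity constant $c$, converted to a resolvent estimate) is the same. Your bookkeeping is sound: the two forms differ only in the weight $1/\beta_\varepsilon$ versus $\beta_\varepsilon$ on the first integrand, $|1/\beta_\varepsilon-\beta_\varepsilon|\le K_0\varepsilon$ because $k$, $\alpha$ are bounded and $S$ is bounded, and the uniform lower bound $c$ coming from the zero-order term is independent of $\theta$ and $\varepsilon$, so the constant produced by \cite[Thm.~3]{bdv} is uniform in $\theta$ as required.
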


The main point in this proposition is that $\beta_\varepsilon\rightarrow 1$ uniformly as $\varepsilon\rightarrow 0$. 
Its proof is very similar to the proof of Theorem 3.1 in \cite{dov4} 
and will be omitted here. For technical reasons, we start to study the sequence of operators $M_\varepsilon^\theta$.



Consider the closed subspace $\mathcal{L} = \{w(s)\, 1: w\in {\rm L}^2[0,L)\}$ of the Hilbert space $\tilde{\mathcal{H}}_\varepsilon$.
Take the orthogonal decomposition 
$\tilde{\mathcal{H}}_\varepsilon=\mathcal{L}\oplus\mathcal{L}^{\bot}$.
Thus, for $\psi\in \dom m_\varepsilon^\theta$, one can write 
\begin{equation}\label{decomposition}
\psi(s,y)=w(s)\, 1+ \eta(s,y), \quad w \in H^1[0,L), \eta \in \dom m_\varepsilon^\theta 
\cap {\cal L}^\perp.
\end{equation} 
Furthermore, $w(0)=w(L)$.

Define $ a_\varepsilon(s):=\int_S\beta_\varepsilon(s,y)\dy$ and introduce the Hilbert space ${\cal H}_{a_\varepsilon}:=L^2([0,L),a_\varepsilon\ds)$ equipped whit the inner product
$\langle w_1, w_2 \rangle_{{\cal H}_{a_\varepsilon}} = \int_{0}^{L} \overline{w_1} w_2 a_\varepsilon \ds.$
Acting in  ${\cal H}_{a_\varepsilon}$, consider the one dimensional quadratic form
\begin{eqnarray*}
n^\theta_\varepsilon(w) := m^\theta_\varepsilon(w \, 1) 
& = &  
\int_Q \beta_\varepsilon \left( |(\partial_s +h_\theta)w|^2 + c |w|^2 \right) \ds\dy,\\
& =  &
\int_{0}^{L}\left(a_\varepsilon(s)|(\partial_s +h_\theta) w|^2+ c \, a_\varepsilon(s)|w|^2\right)\ds,
\end{eqnarray*}
$\dom n^{\theta}_\varepsilon:=\{w\in \mathcal{H}^1[0,L); w(0)=w(L) \}$.
Denote by $N^\theta_\varepsilon$ the self-adjoint operator associated with $n^\theta_\varepsilon(w)$.

\vspace{0.3cm}
\noindent
{\bf Proof of Theorem \ref{reductionofdimension}:}
We begin with some observations. 
If $\eta\in \dom m_\varepsilon^\theta \cap \mathcal{L}^\perp$, 
\begin{equation}\label{etaconditionsfirst}
\int_Q w(s) \eta(s,y) \beta_\varepsilon\ds\dy = 0, \quad \forall w \in   \mathcal{L}.
\end{equation}
Consequently,
\begin{equation} \label{etaconditions}
\int_S\eta(s,y)\beta_\varepsilon(s,y) \dy = 0 \quad \hbox{a.e. s},
\end{equation}
{and} 
\begin{equation}\label{etaconditions2}
\int_S \beta_\varepsilon(s,y) \partial_s \eta(s,y) \dy= - \int_S \partial_s\beta_\varepsilon(s,y)\eta(s,y)\dy\quad \hbox{a.e. s}.
\end{equation}
Furthermore, for each $s \in [0,L)$, the Min Max Principle ensures that
\begin{equation}\label{estimativaparasegundoautovalor}
\int_S |\nabla_ y \eta(s,y)|^2 \beta_\varepsilon \dy \geq \lambda_\varepsilon^2(s) \int_S |\eta|^2 \beta_\varepsilon \dy;
\end{equation}
see Section \ref{sectioncross}.

Denote by $m^{\theta}_\varepsilon(\psi_1,\psi_2)$ the sesquilinear form associated with the quadratic form $m^\theta_\varepsilon(\psi)$. 
For $\psi\in \dom m_\varepsilon^\theta$, we consider the decomposition (\ref{decomposition}) and write
\[m^\theta_\varepsilon(\psi)=n^\theta_\varepsilon(w)+ m^\theta_\varepsilon(w \, 1,\eta)+m^\theta_\varepsilon(\eta,w \, 1)+m^\theta_\varepsilon(\eta).\]

We are going to check that there are functions 
$c(\varepsilon)$, $0\leq p(\varepsilon)$ and
$0\leq q(\varepsilon)$, which do not depend on $\theta \in {\cal C}$,
so that $n^\theta_\varepsilon(w)$, $m^\theta_\varepsilon(w \, 1,\eta)$ and  $m^\theta_\varepsilon(\eta)$ satisfy the following conditions:
\begin{equation}\label{c1}
n^\theta_\varepsilon(w) \geq c(\varepsilon)\|w\|_{{\cal H}_{a_\varepsilon}}^{2}, \quad \forall w \in \dom n^\theta_\varepsilon, 
\quad c(\varepsilon)\geq c_0;
\end{equation}
\begin{equation}\label{c2}
	m^\theta_\varepsilon(\eta)\geq p(\varepsilon)\|\eta\|_{\tilde{{\cal H}}_\varepsilon}^2, 
	\quad \forall\eta\in \dom m_\varepsilon^\theta \cap \mathcal{L}^\perp;
\end{equation}
\begin{equation}\label{c3}
|m^\theta_\varepsilon(w \, 1, \eta)|^2\leq q(\varepsilon)^2 n^\theta_\varepsilon (w) \,m^\theta_\varepsilon(\eta), \quad \forall\in \psi \in \dom m^\theta_\varepsilon;
\end{equation}
and with
\begin{equation}\label{c4}
	p(\varepsilon)\rightarrow\infty, \quad c(\varepsilon)=O(p(\varepsilon)), \quad q(\varepsilon)\rightarrow 0 \,\, \text{as}\,\, \varepsilon\rightarrow 0.
\end{equation}
Thus, Proposition 3.1 in \cite{solomyak}, ensures that,
for all $\varepsilon>0$ small enough, 
\begin{equation}\label{resolvent1}
\sup_{\theta \in {\cal C}} \left\{\|(M^\theta_\varepsilon)^{-1}-((N^\theta_\varepsilon)^{-1}\oplus {\bf 0})\|\right\}
\leq  p(\varepsilon)^{-1}
+ K \, q(\varepsilon)c(\varepsilon)^{-1},
\end{equation}
for some number $K>0$.
Recall ${\bf 0}$ is the null operator on the subspace ${\cal L}^\perp$.

Clearly, 
\[n^\theta_\varepsilon(w)\geq c\|w\|_{{\cal H}_{a_\varepsilon}}^2, \quad \forall w\in \dom n_\varepsilon^\theta.\]
By defining  $c(\varepsilon):=c$, it follows the condition (\ref{c1}).

Recall the condition (\ref{conhintro}) in the Introduction.
Note that
\[m^\theta_\varepsilon(\eta)  \geq  \frac{1}{\varepsilon^2}
\int_Q\frac{\beta_\varepsilon}{h^2}|\nabla_y\eta|^2 \ds\dy
\geq 
\frac{1}{\varepsilon^2c_2^2}\int_Q\beta_\varepsilon|\nabla_y\eta|^2\ds\dy,
\quad \forall \eta \in \dom m_\varepsilon^\theta \cap {\cal L}^\perp.\]
By (\ref{crosssectiongamma}) and (\ref{estimativaparasegundoautovalor}), for all $\varepsilon > 0$ small enough,
\[m^\theta_\varepsilon(\eta)  \geq  
\frac{\tilde{\gamma}}{\varepsilon^2c_2^2}\int_Q|\eta|^2 \beta_\varepsilon \ds\dy,
\quad \forall \eta \in \dom m_\varepsilon^\theta \cap {\cal L}^\perp.\]
Just to take 
$p(\varepsilon):=\tilde{\gamma}/ \varepsilon^2 c_2^2$
and then condition (\ref{c2}) is satisfied.

By polarization identity, 
\[m^\theta_\varepsilon(w \, 1,\eta)=
	\int_Q\beta_\varepsilon\overline{\left(\partial^{Rh}_{s,y}+h_\theta\right) w}
	(\partial^{Rh}_{s,y} +h_\theta ) \eta \,\ds\dy 
	+\int_Q\frac{\beta_\varepsilon}{\varepsilon^2 h^2} \langle \nabla_y w,\nabla_y\eta\rangle \ds\dy,\]
which, by (\ref{etaconditionsfirst}) and (\ref{etaconditions}), is simplified to	
\[m^\theta_\varepsilon(w\, 1,\eta)=\int_Q\beta_\varepsilon\overline{(\partial_sw+h_\theta w)}\,\partial_s\eta\,\ds\dy+
\int_Q\beta_\varepsilon\overline{\left(\partial_s w + {h_\theta w} \right)}
\langle\nabla_y \eta, R^h\rangle \,\ds\dy.\]

By (\ref{etaconditions2}),
\[m^\theta_\varepsilon(w1,\eta)=-\int_Q \partial_s(\beta_\varepsilon)\overline{(\partial_sw+h_\theta w)}\eta\,\ds\dy+\int_Q\beta_\varepsilon\overline{(\partial_sw+h_\theta w)}\langle\nabla_y\eta,R^h\rangle\,\ds\dy.\]

Note that there exists $K > 0$ so that 
$|\partial(\beta_\varepsilon)(s,y)|\leq\varepsilon K$,  for all $(s,y) \in Q$.
Since $R^h$ has  bounded coordinates, by H\"older inequality,
\begin{eqnarray*}
|m^\theta_\varepsilon(w \, 1,\eta) |
&\leq & 
K \left( \varepsilon \int_Q|\partial_s w+h_\theta w|\,|\eta|\,\ds\dy + \int_Q\left|\partial_s w + h_\theta w \right| 
\left|\nabla_y \eta \right|\ds \dy\right)\\
&\leq & 
\varepsilon \, K \left(\int_Q|\partial_s w + h_\theta w|^2 \ds\dy\right)^{1/2}\left(\int_Q|\eta|^2\,\ds\dy\right)^{1/2}\\
& +  &
K
\left(\int_Q\beta_\varepsilon|\partial_s w + h_\theta w|^2 \ds\dy\right)^{1/2}\left(\int_Q\beta_\varepsilon|\nabla_y\eta|^2\,\ds\dy\right)^{1/2}\\
&\leq & K\left(n^\theta_\varepsilon(w)\right)^{1/2} \left[\varepsilon \left(m^\theta_\varepsilon(\eta)\right)^{1/2}+ \left(\int_Q\frac{\beta_\varepsilon}{h^2}|\nabla_y\eta|^2\,\ds\dy\right)^{1/2}\right],
\end{eqnarray*}
for all $w \in \dom n_\varepsilon^\theta$, for all $\eta \in \dom m_\varepsilon^\theta \cap {\cal L}^\perp$,
for some $K> 0$, for all $\varepsilon > 0$ small enough.

Now, we can see that
\[|m^\theta_\varepsilon(w \, 1,\eta)| \leq 
K \, \varepsilon \, (n^\theta_\varepsilon(w))^{1/2}(m^\theta_\varepsilon(\eta))^{1/2},
\quad 
\forall w \in \dom n^\theta_\varepsilon, \forall \eta \in \dom m_\varepsilon^\theta \cap {\cal L}^\perp,
\]
for some $K > 0$, for all $\varepsilon > 0$ small enough.

Then, by taking $q(\varepsilon) := K \, \varepsilon$, it is found that conditions (\ref{c3}) and (\ref{c4}) are satisfied. Therefore, 
we finish the proof of (\ref{resolvent1}) where
the upper bound in that inequality is 
$K \, \varepsilon$.

The next step is to study the sequence of one-dimensional operators $N_\varepsilon^\theta$.

In order to work in $L^2[0,L)$ with the usual measure, we define the unitary operator
\begin{equation}\label{pilast}
\begin{array}{cccc}
\Pi_\varepsilon :& L^2[0,L) & \rightarrow & {\cal H}_{a_\varepsilon}\\
& w & \mapsto & a_\varepsilon^{-1/2}w
\end{array},
\end{equation}
and the quadratic form 
\begin{eqnarray*}
o^\theta_\varepsilon(w)&:=& n^\theta_\varepsilon(\Pi_\varepsilon \, w ) \\
&=& \int^L_0\left( |\partial_s w+h_\theta w - (2 \,a_\varepsilon)^{-1}\partial_s(a_\varepsilon) w |^2+c|w|^2\right) \ds,
\end{eqnarray*}
$\dom o^\theta_\varepsilon=\{w\in \mathcal{H}^1[0,L);  w(0)=w(L)\}$. 
Denote by $O^\theta_\varepsilon$ the self-adjoint operator associated with $o^\theta_\varepsilon(w)$.
Note that  $O_\varepsilon^\theta = \Pi_\varepsilon ^{-1} N_\varepsilon^\theta \, \Pi_\varepsilon$.
  
Finally, we define
\[t^\theta(w):=\int^L_0\left(|\partial_s w+h_\theta w|^2+c|w|^2\right) \ds, \quad \dom t^\theta := \dom o^\theta_\varepsilon.\]
The self-adjoint operator associated with
$t^\theta(w)$ is given by $T^\theta$; see (\ref{effectiveoperator}) in the Introduction.

One can show that there exists $K>0$ so that, for all $\varepsilon > 0$ small enough,
\[| o_\varepsilon^\theta(w) - t^\theta(w)| \leq K \, \varepsilon \, t^\theta(w),
\quad \forall w \in \dom t^\theta, \forall \theta \in {\cal C}.\]
Thus, Theorem 3 in \cite{bdv} ensures that, for all $\varepsilon>0$ small enough,
 \begin{equation} \label{resolvent2}
\sup_{\theta \in {\cal C}} \left\{ 
\|(O^\theta_\varepsilon)^{-1}-(T^\theta)^{-1}\| \right\} \leq K\varepsilon.
\end{equation} 
	
It is important to mention that the constants $K$'s, in all this proof,
do not depend on $\theta \in {\cal C}$.

By Proposition \ref{proposition1}, estimates (\ref{resolvent1})   and  (\ref{resolvent2}),  Theorem \ref{reductionofdimension} is proven.

\vspace{0.3cm}
\noindent
{\bf Proof of Corollary  \ref{asymptoticbehaviour}:}
Theorem  \ref{reductionofdimension} in  the Introduction and Corollary 2.3 of 
\cite{gohberg}
imply
\begin{equation}\label{aproxinverse}
\left| \frac{1}{E_n(\varepsilon, \theta)} - \frac{1}{\nu_n(\theta)} \right| \leq K \, \varepsilon,
\quad \forall n \in \mathbb N, \, \forall \theta \in {\cal C},
\end{equation}
for all $\varepsilon > 0$ small enough.
Then,
\[\left|E_n(\varepsilon, \theta) - \nu_n(\theta)\right| \leq 
K \, \varepsilon \, |E_n(\varepsilon, \theta)| \, |\nu_n(\theta)|, \quad \forall n \in \mathbb N, \, \forall \theta \in {\cal C},\]
for all $\varepsilon > 0$ small enough.

The functions $\nu_n(\theta)$ are continuous in ${\cal C}$ and consequently bounded (see Theorem XIII.89 in \cite{reedsimon4}).
This fact and the inequality (\ref{aproxinverse}) ensure that, for each ${\tilde n}_0 \in \mathbb N$, there exists
$K_{\tilde{n}_0}> 0$, so that,
\[|E_n(\varepsilon, \theta) | \leq K_{\tilde{n}_0}, \quad \forall \theta \in {\cal C},\]
for all $\varepsilon > 0$ small enough.

Finally, for each $n_0 \in \mathbb N$, there exists $K_{n_0} > 0$ so that
\[\left|E_n(\varepsilon, \theta) - \nu_n(\theta)\right| \leq 
K_{n_0} \, \varepsilon, \quad n=1, 2 \cdots, n_0, \forall \theta \in {\cal C},\]
for all $\varepsilon > 0$ small enough.

\section{Existence of band gaps; proof of Theorem \ref{corgap}}\label{sectionbandgaps}

This section is dedicated to the proof of Theorem \ref{corgap}.
The steps are similar to those in \cite{yoshitomi}.
In that work, the author studied the
band gap of the spectrum of the Dirichlet Laplacian in a planar periodically curved strip.

Consider the operator 
\[Tw = - w'' + \frac{h''(s)}{h(s)} w + c w, \quad \dom T = H^2(\mathbb R).\]

Recall we have denoted by $\nu_n(\theta)$ th $n$th eigenvalue of $T^\theta$.
By Theorem XIII.89 in \cite{reedsimon4}, 
each $\nu_n(\theta)$ is a continuous function in ${\cal C}$. Furthermore,

\vspace{0.3cm}
\noindent
(a) $\nu_n(\theta) = \nu_n(-\theta)$, for all $\theta \in {\cal C}$, $n=1,2,3, \cdots$.

\vspace{0.3cm}
\noindent
(b) For $n$ odd (resp. even), $\nu_n(\theta)$ is strictly  monotone increasing (resp. decreasing) as $\theta$
increases from $0$ to $\pi/L$. In particular,
\[\nu_1(0) < \nu_1(\pi/L) \leq \nu_2(\pi/L) < \nu_2(0) \leq \cdots
\leq \nu_{2n-1}(0) < \nu_{2n-1}(\pi/L)  \]
\[ \leq \nu_{2n}(\pi/L) < \nu_{2n}(0) \leq \cdots.\]

Now, for each $n=1,2,3,\cdots$, define
\[ B_n := \left\{
\begin{array}{cc}
\left[ \nu_n(0), \nu_n(\pi/L) \right],  & \hbox{for} \, \, \,  n \,\,\, \hbox{odd}, \\
\left[ \nu_n(\pi/L), \nu_n(0) \right],  & \hbox{for} \, \, \, n \,\, \, \hbox{even},
\end{array}\right.
\]
and
\[ G_n := \left\{
\begin{array}{l}
\left(\nu_n(\pi/L), \nu_{n+1}(\pi/L) \right),  \,\,\,
\hbox{for} \, \, \,  n \,\,\, \hbox{odd so that} \,\,\, \nu_n(\pi/L) \neq \nu_{n+1}(\pi/L), \\
\left( \nu_n(0), \nu_{n+1}(0) \right),  \,\,\,
\hbox{for} \, \, \, n \,\, \, \hbox{even so that} \,\,\, \nu_n(0) \neq \nu_{n+1}(0), \\
\emptyset,  \,\,\,  \hbox{otherwise}. 
\end{array}
\right.
\]

By Theorem XIII.90 in \cite{reedsimon4},
we have $\sigma(T) = \cup_{n=1}^{\infty} B_n$;
$B_n$ is called the $j$th band of $\sigma(T)$, and
$G_n$ the gap of $\sigma(T)$ if $B_n\neq \emptyset$.

Corollary \ref{asymptoticbehaviour} implies that for any $n_0 \in \mathbb N$, there exists $\varepsilon_{n_0} > 0$ so that,
for all $\varepsilon \in (0, \varepsilon_{n_0})$,
\[\max_{\theta \in {\cal C}} E_n (\varepsilon, \theta) =
\left\{
\begin{array}{l}
\nu_n(\pi/L) + O(\varepsilon), \,\,\, \hbox{for} \,\,\, n \,\,\, \hbox{odd}, \\
\nu_n(0) + O(\varepsilon), \,\,\, \hbox{for} \,\,\, n \,\,\, \hbox{even},
\end{array}
\right.
\]
and
\[\min_{\theta \in {\cal C}} E_n (\varepsilon, \theta) =
\left\{
\begin{array}{l}
\nu_n(0) + O(\varepsilon), \,\,\, \hbox{for} \,\,\, n \,\,\, \hbox{odd}, \\
\nu_n(\pi/L) + O(\varepsilon),\,\,\, \hbox{for} \,\,\, n \,\,\, \hbox{even},
\end{array}
\right.
\]
hold for each $n=1,2,\cdots,n_0$.
Thus, we have

\begin{Corollary}\label{corcorrected}
For any $n_2 \in \mathbb N$, there exists $\varepsilon_{n_2+1} > 0$ so that,
for all $\varepsilon \in (0, \varepsilon_{n_2+1})$, 
\[\min_{\theta \in {\cal C}} E_{n+1} (\varepsilon, \theta) 
-
\max_{\theta \in {\cal C}} E_n (\varepsilon, \theta) =
|G_n| + O(\varepsilon),\]
holds for $n=1,2, \cdots, n_2$,
where $| \cdot |$ is the Lebesgue measure.
\end{Corollary}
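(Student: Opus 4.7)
The plan is essentially to read off the claim directly from the two displayed asymptotic formulas for $\max_{\theta\in{\cal C}} E_n(\varepsilon,\theta)$ and $\min_{\theta\in{\cal C}}E_n(\varepsilon,\theta)$ that immediately precede the statement, combined with the explicit description of the gap $G_n$ in terms of $\nu_n(0)$ and $\nu_n(\pi/L)$. The only organizational work is a parity case split on $n$, and a uniform choice of $\varepsilon_{n_2+1}$.

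First I would apply Corollary \ref{asymptoticbehaviour} with $n_0 = n_2+1$ to obtain $\varepsilon_{n_2+1} > 0$ such that, for every $\varepsilon \in (0,\varepsilon_{n_2+1})$ and every $n \in \{1,\ldots,n_2+1\}$, the estimate (\ref{asymptoticintroduction}) holds uniformly in $\theta \in {\cal C}$. Since the functions $\nu_n(\theta)$ are continuous on ${\cal C}$, this uniformity transfers to $\max_{\theta}$ and $\min_{\theta}$, yielding exactly the two asymptotic identities displayed just before the statement.

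Next I would carry out the parity analysis. If $n \in \{1,\ldots,n_2\}$ is odd, then $n+1$ is even, and property (b) of Section \ref{sectionbandgaps} gives $\max_\theta E_n(\varepsilon,\theta) = \nu_n(\pi/L) + O(\varepsilon)$ and $\min_\theta E_{n+1}(\varepsilon,\theta) = \nu_{n+1}(\pi/L) + O(\varepsilon)$. Subtracting produces
\[\min_\theta E_{n+1}(\varepsilon,\theta) - \max_\theta E_n(\varepsilon,\theta) = \nu_{n+1}(\pi/L) - \nu_n(\pi/L) + O(\varepsilon),\]
which by the definition of $G_n$ equals $|G_n| + O(\varepsilon)$ (the case $\nu_n(\pi/L) = \nu_{n+1}(\pi/L)$, where $G_n = \emptyset$ and $|G_n|=0$, is automatically covered). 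If $n$ is even an entirely analogous computation at $\theta = 0$ gives the same conclusion with $\nu_{n+1}(0) - \nu_n(0)$ in place of $\nu_{n+1}(\pi/L) - \nu_n(\pi/L)$.

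Finally, since the case split produces the required identity for each $n \in \{1,\ldots,n_2\}$ with its own $O(\varepsilon)$ constant, I would absorb the finitely many constants into a single one, uniform in both $n$ and $\theta$. There is no real obstacle: the only subtlety worth flagging is ensuring that $\max$ and $\min$ in the asymptotic formulas are evaluated at the same endpoint of ${\cal C}$ as the corresponding endpoints appearing in the definition of $|G_n|$, and this matching is exactly what the alternating monotonicity from property (b) guarantees.
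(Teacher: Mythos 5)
Your proposal is correct and follows the same route as the paper: apply Corollary \ref{asymptoticbehaviour} with $n_0 = n_2+1$, pass to the displayed $\max_\theta$/$\min_\theta$ formulas via property (b), subtract with the appropriate parity matching, and identify the difference with $|G_n|$ (including the degenerate $G_n=\emptyset$ case). The paper leaves this verification implicit with a single ``Thus, we have,'' so you have simply filled in the intended argument.
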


Besides Corollary \ref{corcorrected},
another important point to prove Theorem \ref{corgap} is 
the following result due to Borg \cite{borg}.

\begin{Theorem} (Borg)
Suppose that $W$ is a real-valued, piecewise continuous
function on $[0,L]$. Let $\lambda_n^{\pm}$ be the $n$th eigenvalue of the following operator
counted with multiplicity respectively
\[-\frac{d^2}{ds^2} + W(s), \quad \hbox{in} \quad 
L^2(0,L),\]
with domain
\begin{equation}\label{domainborg}
\{w \in H^2(0,L); w(0)=\pm w(L),
w'(0)=\pm w'(L)\}.
\end{equation}

We suppose that
\[\lambda_n^+ = \lambda_{n+1}^+, \quad
\hbox{for all even} \,\, n,\]
and
\[\lambda_n^- = \lambda_{n+1}^-, \quad
\hbox{for all odd} \,\, n.\]
Then,
$W$ is constant on $[0, L]$.
\end{Theorem}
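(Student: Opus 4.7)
The plan is to use Hill's Floquet theory to turn the hypothesis into a statement about the absence of gaps in the spectrum of the Hill operator on $\mathbb{R}$, and then invoke the inverse-spectral content of Borg's theorem to conclude constancy. Let $y_1(s,\lambda), y_2(s,\lambda)$ be the standard fundamental solutions of $-y''+Wy=\lambda y$ determined by $y_1(0)=y_2'(0)=1$ and $y_1'(0)=y_2(0)=0$, and let the monodromy matrix
\[
M(\lambda)=\begin{pmatrix} y_1(L,\lambda) & y_2(L,\lambda) \\ y_1'(L,\lambda) & y_2'(L,\lambda) \end{pmatrix}\in SL_2(\mathbb{R})
\]
have Hill discriminant $\Delta(\lambda)=\operatorname{tr} M(\lambda)$. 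A classical computation identifies the eigenvalues of the periodic (resp.\ antiperiodic) problem with the roots of $\Delta(\lambda)=2$ (resp.\ $\Delta(\lambda)=-2$); a root has multiplicity two precisely when $M(\lambda)=\pm I$, equivalently when $\Delta'(\lambda)=0$ there.

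The first step is to reinterpret the hypothesis via the standard interlacing
\[
\lambda_1^+<\lambda_1^-\leq\lambda_2^-<\lambda_2^+\leq\lambda_3^+<\lambda_3^-\leq\lambda_4^-<\cdots,
\]
in which each non-strict inequality is strict exactly when the corresponding gap of the $\mathbb{R}$-Hill operator is open. The assumptions of the theorem say that every $\leq$ above is actually an equality, so the spectrum of $-d^2/ds^2+W$ on $L^2(\mathbb{R})$ is the single unbroken half-line $[\lambda_1^+,\infty)$.

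The second and crucial step is to show that a Hill operator with no gaps above its bottom has constant potential. A clean route is through the commutation (Darboux) factorization: subtract $\lambda_1^+$ from $W$, choose the positive $L$-periodic ground state $\phi$ of the shifted operator (so that $W-\lambda_1^+=\phi''/\phi$), and write $-\partial_s^2+W-\lambda_1^+=A^*A$ with $A=\partial_s-\phi'/\phi$. Since $A$ preserves both periodic and antiperiodic boundary conditions, the partner $AA^*=-\partial_s^2+\widetilde W$ inherits the same no-gap spectrum above $0$; matching this with the Floquet asymptotic $\Delta(\lambda)=2\cos(L\sqrt{\lambda-\lambda_1^+})+O(|\lambda|^{-1/2})$ forces $\widetilde W=W-\lambda_1^+$, hence $(\phi'/\phi)'=0$. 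Combined with periodicity, this gives $\phi\equiv\mathrm{const}$ and so $W\equiv\lambda_1^+$.

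The main obstacle is precisely this inverse-spectral step. Floquet theory supplies the no-gap reformulation cheaply, but the passage from ``no gaps'' to ``constant potential'' is the substantive content of Borg's theorem and is not elementary. I would either carry through the Darboux/commutation argument in full, or invoke the Gel'fand--Levitan uniqueness theorem (which determines $W$ from its periodic and antiperiodic spectra) applied to the observation that under our hypotheses these spectra coincide with the spectra of the constant potential $W\equiv\lambda_1^+$.
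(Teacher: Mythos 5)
First, note that the paper does not prove this statement at all: it is imported verbatim as a classical result and attributed to Borg \cite{borg}, so there is no internal argument to compare yours against. Judged on its own terms, the first half of your proposal is sound and standard: the interlacing $\lambda_1^+<\lambda_1^-\leq\lambda_2^-<\lambda_2^+\leq\lambda_3^+<\lambda_3^-\leq\lambda_4^-<\cdots$ together with the stated coincidences ($\lambda_n^+=\lambda_{n+1}^+$ for $n$ even, $\lambda_n^-=\lambda_{n+1}^-$ for $n$ odd) does close every spectral gap of the Hill operator on $L^2(\mathbb R)$, and your identification of double eigenvalues with $M(\lambda)=\pm I$, i.e.\ $\Delta'(\lambda)=0$ on $\Delta=\pm2$, is correct.

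The genuine gap is in the step you yourself flag as the substantive one, and neither of your two proposed routes closes it. The Darboux factorization shows only that the partner potential $\widetilde W=W-\lambda_1^+-2(\log\phi)''$ is \emph{isospectral} to $W-\lambda_1^+$ (same discriminant, since $A$ intertwines the periodic and antiperiodic problems and $1/\phi$ is the periodic ground state of $AA^*$); "matching the Floquet asymptotics" cannot upgrade isospectrality to equality, because isospectral Hill potentials generically form a torus of positive dimension parametrized by the Dirichlet divisor. The only reason the isospectral set is a single point in the gapless case is precisely Borg's theorem, so the argument is circular exactly where it matters. The fallback is also unavailable as stated: the Gel'fand--Levitan/Borg two-spectra uniqueness theorem requires spectra for two distinct \emph{separated} boundary conditions at one endpoint (e.g.\ Dirichlet--Dirichlet and Dirichlet--Neumann); the periodic and antiperiodic spectra together determine only $\Delta(\lambda)$, hence the isospectral torus, not $W$. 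To complete the proof you would need one of the genuine mechanisms behind the "no gaps $\Rightarrow$ constant" rigidity, for instance the trace formula $W(x)=\lambda_1^++\sum_n\left(\lambda_{2n}+\lambda_{2n+1}-2\mu_n(x)\right)$, in which the Dirichlet eigenvalues $\mu_n(x)$ are pinched inside the closed gaps and the sum vanishes identically, or Borg's or Hochstadt's original arguments.
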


\noindent
{\bf Proof of Theorem \ref{corgap}:}
For each $\theta \in {\cal C}$, we define the unitary transformation 
$(u_\theta w)(s) = e^{-i\theta s} w(s)$. In particular, consider the
operators  $\tilde{T}^{0}:= u_{0} T^{0} u_{0}^{-1}$
and $\tilde{T}^{\pi/L} := u_{\pi/L} T^{\pi/L} u_{\pi/L}^{-1}$
whose eigenvalues are given by 
$\{\nu_n(0)\}_{n\in \mathbb N}$ and $\{\nu_n(\pi/L)\}_{n\in \mathbb N}$, respectively. Furthermore, the domains of these operators
are given by (\ref{domainborg});
$\tilde{T}^0$ (resp. $\tilde{T}^{\pi/L}$) is called  operator with periodic (resp. antiperiodic) boundary conditions.

Since $h''(s)/h(s)$ is not constant in $[0,L]$, by Borg's Theorem, without loss of generality,
we can say that there exists $n_1 \in \mathbb N$ so that
$\nu_{n_1}(0) \neq \nu_{n_1+1}(0)$.
Now, the result follows by  Corollary  \ref{corcorrected}.


\

\appendix
\section{Appendix}\label{appendixcore}

Let $\mathcal{J}$ be a Hilbert space and 
$b: \dom b \times \dom b \to \mathbb C$ a sesquilinear form  in $\mathcal{J}$.
Denote by $b(\psi)=b(\psi, \psi)$ the quadratic form associated with it.
We say that  $b(\psi)$ is  lower bounded if 
there is $\beta\in \mathbb R$ with 
$b(\psi) \geq \beta \|\psi\|^2$, for all $\psi \in \dom b$.
If $\beta>0$, $b$ is called  positive.
A sesquilinear form $b$ is called hermitian if
$b(\psi, \eta) = b(\eta, \psi)$, for all $\psi, \eta \in \dom b$.

Let $b$ be a hermitian form and $(\psi_n) \subset \dom b$. 
Even though b is not necessarily
positive, this sequence is called a Cauchy sequence with respect to $b$ 
(or in $(\dom b, b)$) if $b(\psi_n - \psi_m) \to 0$ as $n,m \to \infty$. 
It is said that $(\psi_n)$ converges to $\psi$ with
respect to $b$ (or in $(\dom b, b)$) if $\psi \in \dom b$ and 
$b(\psi_n - \psi) \to 0$ as $n \to \infty$.

A sesquilinear form $b$ is closed if for each Cauchy sequence $(\psi_n)$
in $(\dom b, b)$ with $\psi_n \to \psi$ in ${\cal J}$, one has $\psi \in \dom b$ and 
$\psi_n \to \psi$ in $(\dom b, b)$.

Given a sesquilinear form $b$, the operator $T_b$ is associated with $b$ is defined as 
\begin{eqnarray*}
\dom T_b & := & \{ \psi \in  \dom  b: \exists \zeta \in {\cal J} \text{ with } 
b(\eta,\psi)= \langle  \eta, \zeta \rangle, \forall \eta\in \dom b\},\\
T_b \psi & := & \zeta, \quad \psi \in \dom T_b.
\end{eqnarray*}
Thus, $b(\eta,\psi) = \langle \eta,T_b \psi \rangle$, for all $\eta \in \dom b$, 
for all $\psi \in \dom T_b$. 
Such operator is well defined when $\dom b$ is dense in ${\cal J}$.

Recall the quadratic form $t_\varepsilon^\theta(\psi)$ and the operator $T_\varepsilon^\theta$
defined in Section \ref{floquetbloch}.
The goal is to justify that $T_\varepsilon^\theta$ is the self-adjoint operator associated with
$t_\varepsilon^\theta(\psi)$.
The proof is separated in two steps. At first, we prove that
$t_\varepsilon^\theta(\psi)$ is a closed quadratic form. Thus, 
by Theorem 4.2.6 in \cite{cesar}, there exists a self-adjoint operator, denoted by
$T_{t_\varepsilon^\theta}$, so that,
\[t_\varepsilon^\theta(\eta, \psi) = 
\langle \eta, T_{t_\varepsilon^\theta} \psi \rangle, \quad \forall \eta \in \dom t_\varepsilon^\theta, 
\forall \psi \in \dom T_{t_\varepsilon^\theta}.\]
Second, we show that $T_{t_\varepsilon^\theta} = T_\varepsilon^\theta$.

\begin{Proposition}
For each $\theta \in {\cal C}$, the quadratic form $t_\varepsilon^\theta(\psi)$ is closed.
\end{Proposition}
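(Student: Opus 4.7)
The plan is to show that the form norm $\|\psi\|_\#^2 := t_\varepsilon^\theta(\psi)$ is equivalent, on $\dom t_\varepsilon^\theta$, to the standard $H^1(Q)$-norm, and that $\dom t_\varepsilon^\theta$ is a closed subspace of $H^1(Q)$. Closedness of the form then follows by a standard completeness argument.

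\textbf{Step 1 (control of the coefficients).} By hypothesis $0 < c_1 \leq h \leq c_2$ and $h \in C^2$, so $h'$ and $h''$ are bounded on $\mathbb{R}$. Moreover $\beta_\varepsilon(s,y) = 1 - \varepsilon k(s)(y_1\cos\alpha + y_2\sin\alpha)$ with $k$, $\alpha$ continuous and $L$-periodic and $S$ bounded, so for $\varepsilon$ small enough there exist $0 < \beta_1 \leq \beta_2$ with $\beta_1 \leq \beta_\varepsilon \leq \beta_2$ on $Q$. Consequently each coefficient $h^2/\beta_\varepsilon$, $\beta_\varepsilon/\varepsilon^2$, and $h^2\beta_\varepsilon$ is bounded above and below by positive constants (which may depend on $\varepsilon$ through the middle one, but that is irrelevant for this step — $\varepsilon$ is fixed). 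Similarly, $R^h(s,y) = (Ry)(\tau+\alpha') - y(h'/h)$ has uniformly bounded components on $\overline{Q}$.

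\textbf{Step 2 (equivalence with the $H^1(Q)$ norm).} Since $h^2\beta_\varepsilon$ is bounded above and below, the weighted norm of $\mathcal{H}_\varepsilon'$ is equivalent to the usual $L^2(Q)$ norm. The potential term gives
\[
t_\varepsilon^\theta(\psi) \geq c\, c_1^2\beta_1 \|\psi\|_{L^2(Q)}^2.
\]
For the gradient, writing $\partial_s\psi = \partial^{Rh}_{s,y}\psi - \langle \nabla_y\psi, R^h\rangle$ and using $|\partial^{Rh}_{s,y}\psi + i\theta\psi|^2 \geq \tfrac{1}{2}|\partial^{Rh}_{s,y}\psi|^2 - \theta^2|\psi|^2$, one obtains positive constants $K_1, K_2$ (depending on $\varepsilon$ but not on $\psi$) such that
\[
t_\varepsilon^\theta(\psi) + K_1\|\psi\|_{L^2(Q)}^2 \;\geq\; K_2\bigl(\|\partial_s\psi\|_{L^2(Q)}^2 + \|\nabla_y\psi\|_{L^2(Q)}^2 + \|\psi\|_{L^2(Q)}^2\bigr).
\]
The opposite inequality $t_\varepsilon^\theta(\psi) \leq K_3\|\psi\|_{H^1(Q)}^2$ is immediate from Step 1 and Cauchy–Schwarz applied to the $R^h$ term and the $i\theta$ term. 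Hence the form norm is equivalent to the $H^1(Q)$-norm on $\dom t_\varepsilon^\theta$.

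\textbf{Step 3 (closedness of the domain in $H^1(Q)$).} The trace operator $\gamma : H^1(Q) \to H^{1/2}(\{0\}\times S) \oplus H^{1/2}(\{L\}\times S)$, $\gamma\psi = (\psi(0,\cdot), \psi(L,\cdot))$, is continuous, hence the map $\psi \mapsto \psi(0,\cdot) - \psi(L,\cdot)$ from $H^1(Q)$ into $L^2(S)$ is continuous, and its kernel — which is exactly $\dom t_\varepsilon^\theta$ — is closed in $H^1(Q)$.

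\textbf{Step 4 (conclusion).} Let $(\psi_n) \subset \dom t_\varepsilon^\theta$ be Cauchy with respect to $t_\varepsilon^\theta$ with $\psi_n \to \psi$ in $\mathcal{H}_\varepsilon'$. By Step 2, $(\psi_n)$ is Cauchy in $H^1(Q)$, hence converges to some $\tilde\psi \in H^1(Q)$; convergence in $H^1(Q)$ forces convergence in $L^2(Q)$, so $\tilde\psi = \psi$. By Step 3, $\psi \in \dom t_\varepsilon^\theta$, and by Step 2 again $t_\varepsilon^\theta(\psi_n - \psi) \to 0$.

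The only mildly delicate point is Step 2, where one must juggle the twisted derivative $\partial^{Rh}_{s,y}$ and the $\theta$-shift together: absorbing $-\langle \nabla_y\psi, R^h\rangle$ into the $\beta_\varepsilon|\nabla_y\psi|^2/\varepsilon^2$ term via Young's inequality requires the boundedness of $R^h$ from Step 1 — that is what makes the whole argument go through cleanly.
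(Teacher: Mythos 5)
Your proof is correct and follows essentially the same overall route as the paper: establish that the form norm is equivalent to the $H^1(Q)$ norm on the form domain, and deduce closedness from completeness of $H^1(Q)$. There are two genuine differences worth noting. First, the paper reduces to the special case $\theta=0$, $k\equiv 0$ (so $\beta_\varepsilon\equiv 1$) and declares the general case "similar"; you carry out the argument directly in full generality, tracking the $i\theta$ shift and the $\beta_\varepsilon$ weight through Step 1, which is cleaner from the reader's point of view. Second, and more substantively, for the step showing that the $H^1$-limit actually lies in $\dom t_\varepsilon^\theta$ (i.e.\ that the periodic matching $\psi(0,\cdot)=\psi(L,\cdot)$ is preserved), the paper avoids trace theory: it introduces $V_n(y)=\int_0^L\partial_s\psi_n(s,y)\,\ds$, shows $V_n\to V$ in $L^1(S)$, extracts an a.e.\ convergent subsequence, and applies the Fundamental Theorem of Calculus to pass the condition $\psi_n(0,\cdot)=\psi_n(L,\cdot)$ to the limit. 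You instead invoke continuity of the trace operator $H^1(Q)\to L^2(S)$ and note that $\dom t_\varepsilon^\theta$ is the kernel of $\psi\mapsto\psi(0,\cdot)-\psi(L,\cdot)$. Your route is shorter and more structural, at the cost of appealing to a trace theorem on a domain with corners; the paper's FTC argument is more elementary and self-contained. One small remark on Step~2: the phrasing about "absorbing" the cross term into $\beta_\varepsilon|\nabla_y\psi|^2/\varepsilon^2$ via Young's inequality might suggest a smallness requirement on $\varepsilon$, but none is needed — a judicious Young split (using only a $\delta$-fraction of the $|\partial^{Rh}_{s,y}\psi+i\theta\psi|^2$ contribution) makes the equivalence hold for any fixed $\varepsilon$ for which $\beta_\varepsilon$ is bounded below, which is all that is claimed.
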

\begin{proof}
We are going to consider the particular case where $\theta=0$ and $k(s)=0$, i.e., $\beta_\varepsilon(s,y)=1$. The general case is similar.

Let $(\psi_n)$ be a Cauchy sequence in $(\dom t_\varepsilon^0, t_\varepsilon^0)$
with $\psi_n \to \psi$ in $L^2(Q, h^2 \ds \dy)$.
In particularly,  since $h$ is a bounded function,
$(\psi_n)$ is a Cauchy sequence in $L^2(Q)$.
We also note that
\[\int_Q |\nabla_y(\psi_n-\psi_m)|^2 \ds\dy \leq \varepsilon^2 \,
t_\varepsilon^0(\psi_n-\psi_m),\]
and 
\begin{eqnarray*}
\int_Q 
|\partial_s(\psi_n-\psi_m)|^2 \ds \dy 
& \leq & 
\frac{1}{(\inf h(s))^2}  \int_Q  h^2|\partial_s(\psi_n-\psi_m)|^2 \ds\dy \\
& \leq & 
\frac{2}{(\inf h(s))^2}
\int_Q h^2 \left|\partialoperator (\psi_n-\psi_m)\right|^2 \ds\dy  \\
&  + & 
2 \int_Q \left|\langle \nabla_y(\psi_n-\psi_m), R^h \rangle\right|^2 \ds \dy \\
& \leq  &
K \left(t^0_\varepsilon(\psi_n,\psi_m) + 
\int_Q \left(|\nabla_y(\psi_n-\psi_m)|^2 + |\psi_n-\psi_m|^2\right) \ds\dy\right),
\end{eqnarray*}
for some $K> 0$.

With theses inequalities, we can see  that $(\psi_n)$ is a Cauchy sequence in
the Hilbert space
${\cal H}^1(Q)$.
Thus,
there exists $\eta \in \mathcal{H}^1(Q)$, so that, $\psi_n \rightarrow \eta$ in $\mathcal{H}^1(Q)$.
We conclude that
$\eta=\psi$ in $L^2(Q)$. Furthermore,
$\partial_s\psi_n\rightarrow\partial_s\psi$,
$\nabla_y \psi_n \to \nabla_y \psi$ in $L^2({Q})$.
	
Now, we are going to show that $\psi(0,y)=\psi(L,y)$ in $L^2(S)$.
Define 
\[V_n(y):=\int^L_0\partial_s\psi_n(s,y)\ds, \quad V(y):=\int^L_0\partial_s\psi(s,y) \ds,\]
and note that 
\begin{eqnarray*}
\int_s|V_n(y)-V(y)| \dy  & \leq  &   \int_Q|\partial_s\psi_n-\partial_s\psi| \ds\dy \\
& \leq  &
 |Q|^{1/2}\left(\int_Q|\partial_s\psi_n-\partial_s\psi|^2 \ds\dy\right)^{1/2}\rightarrow 0,
 \quad n \to \infty.
\end{eqnarray*}

Thus, $V_n\rightarrow V$ in ${\rm L}^1(S)$.
Therefore,
there exists a subsequence $(V_{n_k})$ of $(V_n)$, so that,
$V_{n_k}(y) \rightarrow V(y)$, a.e. $y$. More exactly,
\[\lim_{k\rightarrow\infty}\int^L_0\partial\psi_{n_k}(s,y) \ds=\int^L_0\partial_s\psi(s,y)\ds,
\quad \hbox{a.e. } y.\]

Recall $\psi_{n_k}(L,y)=\psi_{n_k}(0,y)$.
By Fundamental Theorem of Calculus
\[0 = \lim_{k\rightarrow\infty}(\psi_{n_k}(L,y)-\psi_{n_k}(0,y))=\psi(L,y)-\psi(0,y),\quad \text{a.e. } y.\]
Thus, $\psi \in \dom t_\varepsilon^0$.

Finally, we can see that there exists $K>0$, so that, 
\[t^0_\varepsilon(\psi_n-\psi)\leq K\|\psi_n-\psi\|^2_{\mathcal{H}^1(Q)}\rightarrow 0,
\quad n \to \infty,\]
i.e.,
$\psi_n \to \psi$ in $(\dom t_\varepsilon^0, t_\varepsilon^0)$.

\end{proof}

\begin{Proposition}
For each $\theta \in {\cal C}$, $T_\varepsilon^\theta = T_{t_\varepsilon}^\theta$.
\end{Proposition}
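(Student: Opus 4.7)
The plan is to establish the inclusion $T_\varepsilon^\theta \subset T_{t_\varepsilon^\theta}$ by an integration-by-parts computation, and then to upgrade it to equality using the fact that a self-adjoint operator admits no proper symmetric extension. The preceding proposition tells us that $t_\varepsilon^\theta$ is closed and densely defined on $\mathcal{H}_\varepsilon'$, so the First Representation Theorem (Theorem 4.2.6 of \cite{cesar}) produces a unique self-adjoint operator $T_{t_\varepsilon^\theta}$ characterized by
\[ t_\varepsilon^\theta(\eta,\psi) = \langle \eta, T_{t_\varepsilon^\theta}\psi\rangle_{\mathcal{H}_\varepsilon'}, \qquad \forall \eta\in \dom t_\varepsilon^\theta,\; \forall \psi\in\dom T_{t_\varepsilon^\theta}. \]
The task thus reduces to verifying the same identity with $T_{t_\varepsilon^\theta}$ replaced by $T_\varepsilon^\theta$.

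To that end, I would fix $\psi\in\dom T_\varepsilon^\theta$ and $\eta\in\dom t_\varepsilon^\theta$ and evaluate $\langle \eta, T_\varepsilon^\theta \psi\rangle_{\mathcal{H}_\varepsilon'}$ directly from the explicit formula \eqref{operatorfibraintro}. The prefactor $1/(h^2\beta_\varepsilon)$ cancels against the weight $h^2\beta_\varepsilon$ of the Hilbert space, leaving two second-order terms. For the longitudinal piece, the outer operator $(\partial_s+\mathrm{div}_y R^h+i\theta)$ is (up to sign) the formal adjoint of $\partialoperator + i\theta$, so moving it off $(h^2/\beta_\varepsilon)(\partialoperator+i\theta)\psi$ onto $\overline{\eta}$ yields the gradient term $\int_Q (h^2/\beta_\varepsilon)\overline{(\partialoperator+i\theta)\eta}(\partialoperator+i\theta)\psi\,\ds\dy$ of $t_\varepsilon^\theta(\eta,\psi)$, together with a boundary piece at $s=0,L$ (from the $\partial_s$ part) and a boundary piece on $[0,L)\times\partial S$ carrying the factor $(h^2/\beta_\varepsilon)\langle R^h,N\rangle(\partialoperator+i\theta)\psi$ (from the $\mathrm{div}_y(R^h\,\cdot)$ part, via the divergence theorem in $y$). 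For the transverse piece, the divergence theorem applied to $\varepsilon^{-2}\mathrm{div}_y(\beta_\varepsilon\nabla_y\psi)$ produces $\varepsilon^{-2}\int_Q \beta_\varepsilon\overline{\nabla_y\eta}\cdot\nabla_y\psi\,\ds\dy$ plus a $\partial S$-boundary piece carrying the factor $(\beta_\varepsilon/\varepsilon^2)\langle\nabla_y\psi,N\rangle$. The $c|\psi|^2$ term needs no integration by parts.

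The critical step is to check that every boundary contribution vanishes. The $s=0,L$ terms vanish because $\eta(0,\cdot)=\eta(L,\cdot)$ by definition of $\dom t_\varepsilon^\theta$, while the $L$-periodicity of $h$, $k$, $\alpha$ gives the $L$-periodicity of $\beta_\varepsilon$ in $s$, and the conditions $\psi(0,\cdot)=\psi(L,\cdot)$ and $\partialoperator\psi(0,\cdot)=\partialoperator\psi(L,\cdot)$ from $\dom T_\varepsilon^\theta$ force $(h^2/\beta_\varepsilon)(\partialoperator+i\theta)\psi$ to match at $s=0$ and $s=L$. The two $\partial S$-contributions combine into
\[ -\int_{[0,L)\times\partial S}\overline{\eta}\left[\frac{h^2}{\beta_\varepsilon}\langle R^h,N\rangle(\partialoperator+i\theta)\psi + \frac{\beta_\varepsilon}{\varepsilon^2}\langle\nabla_y\psi,N\rangle\right]\ds\,dS_y, \]
and an inspection of \eqref{connesq} identifies the bracket with $\partial^{Rh}\psi/\partial N + i\theta(h^2/\beta_\varepsilon)\langle R^h,N\rangle\psi$, which vanishes on the lateral boundary precisely by the twisted Neumann condition defining $\dom T_\varepsilon^\theta$. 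Hence $\langle \eta, T_\varepsilon^\theta\psi\rangle_{\mathcal{H}_\varepsilon'} = t_\varepsilon^\theta(\eta,\psi)$ for every admissible $\eta$, which by the defining property of $T_{t_\varepsilon^\theta}$ gives $\psi\in\dom T_{t_\varepsilon^\theta}$ with $T_{t_\varepsilon^\theta}\psi = T_\varepsilon^\theta\psi$, that is, $T_\varepsilon^\theta\subset T_{t_\varepsilon^\theta}$.

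To conclude, both operators are self-adjoint ($T_\varepsilon^\theta$ by Lemma \ref{lemmadecfloq}, $T_{t_\varepsilon^\theta}$ by the representation theorem), so the inclusion cannot be strict and $T_\varepsilon^\theta = T_{t_\varepsilon^\theta}$. The main obstacle is the bookkeeping in the integration by parts: one must recognize that the coefficient $(h^2/\beta_\varepsilon)\langle R^h,N\rangle$ appearing in the lateral boundary condition for $\dom T_\varepsilon^\theta$ is engineered precisely so that the sum of the two $\partial S$-boundary integrals vanishes identically. Once that cancellation is identified and the $s$-periodicity is used for the longitudinal boundary, the rest of the argument is routine.
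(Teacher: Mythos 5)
Your Step 1 (the inclusion $T_\varepsilon^\theta \subset T_{t_\varepsilon^\theta}$ via integration by parts, identifying and cancelling the boundary terms using periodicity in $s$ and the twisted Neumann condition on $[0,L)\times\partial S$) is correct and matches the paper's own computation. The problem is your Step 2: you invoke the self-adjointness of $T_\varepsilon^\theta$ as given by Lemma~\ref{lemmadecfloq}, but the proof of that lemma explicitly defers the self-adjointness claim to Appendix~A --- and in the paper's logical structure, the self-adjointness of $T_\varepsilon^\theta$ is \emph{established} precisely by proving this proposition (the form $t_\varepsilon^\theta$ is closed, hence the associated operator $T_{t_\varepsilon^\theta}$ is self-adjoint, hence $T_\varepsilon^\theta = T_{t_\varepsilon^\theta}$ is self-adjoint). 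So your final step is circular: you are using a consequence of the proposition to prove the proposition.

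The paper avoids this by doing the converse inclusion directly (its ``Step~2''). Given $\psi \in \dom T_{t_\varepsilon^\theta}$, the identity $t_\varepsilon^\theta(\eta,\psi) = \langle\eta,\zeta\rangle_{\mathcal{H}_\varepsilon'}$ for all $\eta$ first gives $H^2$-regularity of $\psi$ by elliptic regularity (the paper cites Chapter~7 of \cite{baiocchi}); then, with $\psi$ in $H^2(Q)$ the integration-by-parts identity~\eqref{defzopassquf} holds, and $\zeta = Z_\varepsilon^0\psi$ follows by testing against $\eta \in C_0^\infty(Q)$; finally, what remains of the boundary integrals must vanish for \emph{all} $\eta \in \dom t_\varepsilon^\theta$, and choosing $\eta$ of separated-variable form supported near $\partial S$ (resp.\ near $s=0,L$) forces the lateral boundary condition~\eqref{appconin01} and then the periodicity condition~\eqref{appconin02} on $\partialoperator\psi$. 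That is the substantive content your shortcut misses. Your argument would be valid only if you supplied an \emph{independent} proof that the differential expression~\eqref{operatorfibraintro} on its stated $H^2$-domain is self-adjoint, but no such proof is at hand --- it is what the form method is being used to produce.
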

\begin{proof}
Again, consider the particular case $\theta=0$ and $k(s)=0$.
Write $R^h=(R^h_1,R^h_2)$, denote 
by
$N=(N_1, N_2)$
the outward pointing unit normal to $S$
and
${\rm d}A$ the measure of area of the region
$\partial S$.

By identity polarization we obtain the sesquilinear form $t_\varepsilon^0(\eta,\psi)$
associated with the quadratic 
form $t_\varepsilon^0(\psi)$. Namely,
\begin{eqnarray*}
t_\varepsilon^0(\eta,\psi) &= & \int_Q 
\left(h^2\partialoperator\overline{\eta}\,\partialoperator\psi+\frac{1}{\varepsilon^2}\langle\nabla_y\overline{\eta},\nabla_y\psi\rangle\right) \ds\dy \\
& = & \int_Q h^2\partial_s\overline{\eta\,}\partialoperator\psi \ds\dy +
\int_Q h^2\langle\nabla_y\overline{\eta},R^h\rangle\partialoperator\psi \ds\dy \\ 
& + & \int_Q \frac{1}{\varepsilon^2}\langle\nabla_y\overline{\eta},\nabla_y\psi\rangle \ds\dy
+ c \int_Q h^2 \overline{\eta}  \psi \,\ds \dy.
\end{eqnarray*}

For each $\eta \in \dom t_\varepsilon^0$ 
and $\psi \in \dom t_\varepsilon^0 \cap H^2(Q)$, the Fubini Theorem and an integration by parts 
show that
\begin{eqnarray*}
& & 
\int_Q h^2\partial_s \overline{\eta} \, \partialoperator\psi \ds\dy 
=  
-\int_Q \overline{\eta} \, \partial_s\left(h^2\partialoperator\psi\right) \ds\dy
+
\int_S \left( \overline{\eta} \, h^2 \partialoperator\psi \right) |_0^L  \, \dy = \\
& -  &
\int_Q \overline{\eta} \, \partial_s\left(h^2\partialoperator\psi\right) \ds\dy
+
\int_S \overline{\eta}(0,y) h^2(0) 
\left( \partial_{s,y}^{Rh} \psi(L,y) -  \partial_{s,y}^{Rh} \psi(0,y) \right)   \, \dy. 
\end{eqnarray*}
Furthermore,
\begin{eqnarray*}
& &
\int_Q h^2\langle\nabla_y\overline{\eta},R^h\rangle\partialoperator\psi \ds\dy =  \\
&  & 
\int_Q (\partial_{y_1}\overline{\eta})R^h_1h^2\partialoperator\psi \ds\dy	
+\int_Q (\partial_{y_2}\overline{\eta})R^h_2h^2\partialoperator\psi \ds\dy  = \\
&  &
-\int_Q \overline{\eta}\,\partial_{y_1}\left(R^h_1h^2\partialoperator\psi\right) \ds\dy
+\int_0^L \int_{\partial S} \overline{\eta}\,R^h_1h^2\partialoperator\psi N_1\, dA \ds \\
&  &
-
\int_Q \overline{\eta}\partial_{y_2}\left(R^h_2h^2\partialoperator\psi\right) \ds\dy
+\int_0^L \int_{\partial S} \overline{\eta}\,R^h_2h^2\partialoperator\psi N_2\, dA \ds =\\
&  &
-\int_Q \overline{\eta}\,{\rm div}_y\left(R^h h^2\partialoperator\psi\right) \ds\dy
+\int_0^L \int_{\partial S} \overline{\eta}\langle R^h,N\rangle h^2\partialoperator\psi\, dA \ds,
\end{eqnarray*}
and 
\[ \int_Q\frac{1}{\varepsilon^2}\langle\nabla_y\eta,\nabla_y\psi\rangle\, \ds\dy 
=
-\int_Q\frac{1}{\varepsilon^2}\overline{\eta} \,  \Delta_y\psi \, \ds\dy 
+\int_0^L \int_{\partial S} \frac{1}{\varepsilon^2}\overline{\eta}\langle\nabla_y\psi,N\rangle\, dA \ds.\]

Thus, 
\begin{eqnarray*}
t_\varepsilon^0(\eta,\psi)\quad & = &
-\int_Q\overline{\eta}\left[\left(\partial_s+{\rm div}_yR^h\right)h^2\partialoperator\psi+\frac{1}{\varepsilon^2}\Delta_y\psi\right] \ds \dy  \\
& + &
\int_S \overline{\eta}(0,y) h^2(0) 
\left( \partial_{s,y}^{Rh} \psi(L,y) -  \partial_{s,y}^{Rh} \psi(0,y) \right)   \, \dy \\
& + &
\int_0^L \int_{\partial S}
\overline{\eta} \left( h^2 \langle R^h,N \rangle \partialoperator \psi  
+ \frac{1}{\varepsilon^2} \langle\nabla_y\psi, N\rangle \right)\,dA \ds
+ c \int_Q h^2 \overline{\eta} \psi \ds \dy.
\end{eqnarray*}

For $\psi \in \dom t_\varepsilon^0 \cap H^2(Q)$, we define
\[Z_\varepsilon^0 \psi := -\frac{1}{h^2}
\left[\left(\partial_s+
{\rm div}_yR^h\right)h^2\partialoperator\psi+\frac{1}{\varepsilon^2}\Delta_y\psi\right] + c \psi.\]
Therefore,
\begin{eqnarray} \label{defzopassquf}
t_\varepsilon^0(\eta,\psi)
& = &
\langle\eta, Z_\varepsilon^0 \psi\rangle_{\mathcal{H}}  +
\int_S \overline{\eta}(0,y) h^2(0) 
\left( \partial_{s,y}^{Rh} \psi(L,y) -  \partial_{s,y}^{Rh} \psi(0,y) \right)   \, \dy \nonumber \\
& + &
\int_0^L \int_{\partial S}
\overline{\eta} \frac{\partial^{Rh} \psi}{\partial N} \,dA \ds,
\end{eqnarray}   
for all $\eta\in \dom t_\varepsilon^0$, for all $\psi \in \dom t_\varepsilon^0 \cap H^2(Q)$.

\vspace{0.3cm}
\noindent
{\bf Step 1:}
Given $\psi \in \dom T_\varepsilon^0$, we have
$(\partial^{Rh} \psi/\partial N) = 0$ on $[0,L) \times \partial S$ and,
\[t_\varepsilon^0(\eta, \psi) = \langle \eta, T_\varepsilon^\theta  \psi \rangle_{{\cal H}_\varepsilon'}, \quad
\forall \eta \in \dom t_\varepsilon^0.\]
Thus, $ \psi\in \dom T_{t_\varepsilon^0}$ and $T_{t_\varepsilon^0} \psi = T_\varepsilon^0 \psi$. 

\vspace{0.3cm}
\noindent
{\bf Step 2:}
Conversely, take $\psi \in \dom T_{t_\varepsilon^0} \subset \dom t_\varepsilon^0$. Then,
there exists $\zeta \in {\cal H}$, so that,
\[	t_\varepsilon^0(\eta,\psi)=\langle \eta, \zeta \rangle_{{\cal H}_\varepsilon'},\quad \forall \eta \in \dom t_\varepsilon^0.\]
This implies that $\psi \in H^2(Q)$ (see Chapter 7 in \cite{baiocchi}) and, by (\ref{defzopassquf}),
\[\langle \eta, \zeta - Z_\varepsilon^0 \psi \rangle_{{\cal H}_\varepsilon'} =
\int_S \overline{\eta}(0,y) h^2(0) 
\left( \partial_{s,y}^{Rh} \psi(L,y) -  \partial_{s,y}^{Rh} \psi(0,y) \right)   \, \dy
+
\int_0^L \int_{\partial S}
\overline{\eta} \frac{\partial^{Rh}\psi}{\partial N}  \,dA \ds.\]

In particularly, 
\[\langle \eta, \zeta - Z_\varepsilon^0 \psi \rangle_{{\cal H}_\varepsilon'} = 0, \quad
\forall \eta \in C_0^\infty(Q) \subset \dom t_\varepsilon^0.\]

Therefore, $\zeta = Z_\varepsilon^0 \psi$.
It remains to show that $\psi \in \dom T_\varepsilon^0$.

We know that $\psi(0,y)=\psi(L,y)$ in $L^2(S)$. On the other hand,
since $\zeta = Z_\varepsilon^0 \psi$, 
\[\int_S \overline{\eta}(0,y) h^2(0) 
\left( \partial_{s,y}^{Rh} \psi(L,y) -  \partial_{s,y}^{Rh} \psi(0,y) \right)   \, \dy
+
\int_0^L \int_{\partial S}
\overline{\eta} \frac{\partial^{Rh} \psi}{\partial N}\,dA \ds = 0,\]
for all $\eta \in \dom t_\varepsilon^0$.
By taking $\eta(s,y) = w(s) u(y)$, with $w \in C_0^\infty(0,L)$ and $u \in H^1(S)$,
\[\int_0^L  w(s) \int_{\partial S} u(y)
\frac{\partial^{Rh} \psi}{\partial N}  \,dA \ds = 0,
\quad \forall w \in C_0^\infty(0,L), \forall u \in H^1(S).\]
Thus,
\begin{equation}\label{appconin01}
\frac{\partial^{Rh} \psi }{\partial N} = 0, \quad
\hbox{in} \quad L^2(Q).
\end{equation}

Consequently,
\[\int_S \overline{\eta}(0,y) h^2(0) 
\left( \partial_{s,y}^{Rh} \psi(L,y) -  \partial_{s,y}^{Rh} \psi(0,y) \right)   \, \dy = 0,
\quad \forall \eta \in \dom t_\varepsilon^0.\]
With suitable choices of $\eta$, one can show
\begin{equation}\label{appconin02}
\partial_{s,y}^{Rh} \psi(L,y) =  \partial_{s,y}^{Rh} \psi(0,y), \quad
\hbox{in} \quad L^2([0,L)\times \partial S).
\end{equation}

The fact that $\psi(0,y)=\psi(L,y)$ in $L^2(S)$, together
with the conditions (\ref{appconin01}) and (\ref{appconin02}),
ensures that 
$\psi \in \dom T_\varepsilon^0$.
\end{proof}

\begin{Remark}{\rm
Recall the quadratic form $t_\varepsilon(\psi)$ and the operator 
$T_\varepsilon$ defined in Section \ref{coordinates}.
Similarly, one can show that $t_\varepsilon(\psi)$ is a closed quadratic form
and $T_\varepsilon$ is the self-adjoint operator associated with it.
The proof will be omitted in this text.}
\end{Remark}


%
%

\end{document}